\newcommand{\mb}[1]{\pmb{#1}}
\newcommand{\Diag}[1]{\text{Diag}\left(#1\right)}
\newcommand\etal[0]{et al.}
\newcommand\refeq[1]{(\ref{#1})}
\newcommand\refalg[1]{Algorithm~{\ref{#1}}}
\newcommand\reffig[1]{Figure~{\ref{#1}}}
\newcommand\reftab[1]{Table~{\ref{#1}}}
\newcommand\refsec[1]{Section~{\ref{#1}}}
\newcommand\refapdx[1]{Appendix~{\ref{#1}}}
\newcommand\refprop[1]{Proposition.~{\ref{#1}}}
\newcommand\refthm[1]{Theorem.~{\ref{#1}}}
\newcommand\refcor[1]{Corollary.~{\ref{#1}}}
\newcommand\reflma[1]{Lemma.~{\ref{#1}}}
\theoremstyle{definition}
\newtheorem{proposition}{Proposition}
\newtheorem{corollary}{Corollary}
\newtheorem{theorem}{Theorem}
\newtheorem{lemma}{Lemma}
\theoremstyle{remark}
\newcommand{\nonl}{\renewcommand{\nl}{\let\nl\oldnl}}
\begin{document}
\title{Opinion Control under Adversarial Network Perturbation: A Stackelberg Game Approach}

% -----------------------------------------------------------------
% \author{Yuejiang~Li,~\IEEEmembership{Graduate Student Member,~IEEE,}
%         H.~Vicky~Zhao,~\IEEEmembership{Member,~IEEE,}% <-this % stops a space
% \thanks{Y. Li and H. V. Zhao was with the Department of Automation, Tsinghua University, Beijing, 10084 P. R. China e-mail: lyj18@mails.tsinghua.edu, vzhao@tsinghua.edu.cn.}% <-this % stops a space
% %\thanks{J. Doe and J. Doe are with Anonymous University.}% <-this % stops a space
% \thanks{Manuscript received April 19, 2005; revised August 26, 2015.}}
% -----------------------------------------------------------------
\author{Yuejiang~Li, Zhanjiang Chen,
        H.~Vicky~Zhao% <-this % stops a space
\thanks{Y. Li, Z. Chen and H. V. Zhao was with the Department of Automation, Tsinghua University, Beijing, 10084 P. R. China e-mail: {lyj18,czj17}@mails.tsinghua.edu, vzhao@tsinghua.edu.cn.}% <-this % stops a space
%\thanks{J. Doe and J. Doe are with Anonymous University.}% <-this % stops a space
\thanks{Manuscript received April 19, 2005; revised August 26, 2015.}}

% note the % following the last \IEEEmembership and also \thanks - 
% these prevent an unwanted space from occurring between the last author name
% and the end of the author line. i.e., if you had this:
% 
% \author{....lastname \thanks{...} \thanks{...} }
%                     ^------------^------------^----Do not want these spaces!
%
% a space would be appended to the last name and could cause every name on that
% line to be shifted left slightly. This is one of those "LaTeX things". For
% instance, "\textbf{A} \textbf{B}" will typeset as "A B" not "AB". To get
% "AB" then you have to do: "\textbf{A}\textbf{B}"
% \thanks is no different in this regard, so shield the last } of each \thanks
% that ends a line with a % and do not let a space in before the next \thanks.
% Spaces after \IEEEmembership other than the last one are OK (and needed) as
% you are supposed to have spaces between the names. For what it is worth,
% this is a minor point as most people would not even notice if the said evil
% space somehow managed to creep in.

% The paper headers
% \markboth{Journal of \LaTeX\ Class Files,~Vol.~14, No.~8, August~2015}%
% {Shell \MakeLowercase{\textit{et al.}}: Bare Demo of IEEEtran.cls for IEEE Journals}
% The only time the second header will appear is for the odd numbered pages
% after the title page when using the twoside option.
% 
% *** Note that you probably will NOT want to include the author's ***
% *** name in the headers of peer review papers.                   ***
% You can use \ifCLASSOPTIONpeerreview for conditional compilation here if
% you desire.

% make the title area
\maketitle

% As a general rule, do not put math, special symbols or citations
% in the abstract or keywords.
\begin{abstract}
  The emerging social network platforms enable users to share their own opinions, as well as to exchange opinions with others. However, adversarial network perturbation, where malicious users intentionally spread their extreme opinions, rumors and misinfomration to others, is ubiquitous in social networks. Such adversarial network perturbation greatly influences the opinion formation of the public, and threatens our societies. Thus, it is critical to study and control the influence of adversarial network perturbation. Although tremendous efforts have been made in both academia and industry to guide and control the public opinion dynamics, most of these works assume that the network is static, and ignore such adversarial network perturbation. In this work, based on the well-accepted Friedkin-Johnsen opinion dynamics model, we model the adversarial network perturbation, and analyze its impact on the networks' opinion. Then, from the adversary's perspective, we analyze its optimal network perturbation, which maximally changes the networks' opinion. Next, from the network defender's perspective, we formulate a Stackelberg game, and aim to control the network's opinion even under such adversarial network perturbation. We device a projected subgradient algorithm to solve the formualted Stackelberg game. Extensive simulations on real social networks validate our analysis of the advesarial network perturbation's influence and the effectiveness of the proposed opinion control algorithm.
\end{abstract}

% % Note that keywords are not normally used for peerreview papers.
% \begin{IEEEkeywords}
% TBA
% \end{IEEEkeywords}

\IEEEpeerreviewmaketitle

\section{Introduction}
The emerging online social network (OSN) platforms, such as Facebook, Twitter, and WeChat, etc., greatly strengthen the connections among people around the world and ultimately shaped the way of how people form their opinions.
People can easily share their own information anywhere and anytime, and they can also exchange opinions through comments, likes, or reposts.
In addition, users in social network platforms can not only interact with their own known friends. They can also exploit the recommendation function in OSN platforms, interact with those they did not know, and expand their circles all the time.

It is a controversial issue that people can gain other strangers information and opinions in such a convenient way.
On the one hand, governments can use OSN platforms to spread their public statements.
One the other hand, some malicious users and partisan media inject their extreme opinions \cite{guess2021consequences}, influence other innocent users, and trigger severe social riots. A recent example is the ``storming of U.S. Capitol'' in Jan. 2021. Inflamed by President Trump's speech and tweets, ``flag-waving, chanting and cursing throngs overwhelmed security barriers, and the whole building was lockdown \cite{editorial2021}.”
Thus, it is critical to investigate the influence of such extreme opinions, evaluate their impact on the OSN platforms, and further design effective defense mechanism to control the spread of such extreme opinions.

\subsection{Literature Review}
\textbf{Opinion Dynamics Models.} Modeling and analyzing opinion dynamics in social networks have received research attention from different disciplines.
In the literature, there were two lines of works that modeled and analyzed opinion dynamics in social networks.
The first studied the discrete-valued opinion scenario, 
A line of works built discrete opinion dynamics models and studied binary opinion scenario, for example, to support the Democrats or the Republicans. 
In the discrete models, users imitated their neighbors' opinions to update their own.
Popular imitation rules included random imitation rule in the voter model \cite{liggett2013stochastic}, the local majority rule \cite{krapivsky2003dynamics}, and the linear threshold rule \cite{granovetter1978threshold}, etc.
The imitation in the discrete models resulted in users' opinions to change to change from one extreme to the other.
To model how users opinions were changed and formed in a gradual process, the other continuous-valued opinion dynamics models were proposed.
In the continuous models, opinion values were in a certain range, with two endpoints representing two extreme opinions.

One of the fundamental works of the continuous model was the DeGroot model \cite{degroot1974reaching}, where users updated their opinions by averaging those of his/her neighbors.
In the DeGroot model, it would reach the opinion consensus state where all agents held the same opinion.
Hegselmann and Krause assumed that users would ignore opinions that were too far from theirs when updating opinions, and they found that users' opinions would converge to different clusters at equilibrium \cite{hegselmann2002opinion}.
The Friedkin-Johnsen (FJ) model incorporated users' intrinsic beliefs and their stubbornness into the DeGroot model, and it was shown in \cite{friedkin1990social,bindel2015bad} that such stubbornness could cause opinion polarization where users held different opinions at the equilibrium.
% Furthermore, Bindel \etal, explained that a user in the FJ model updated his/her express opinion in order to minimize the difference among those of his/her neighbors and his/her own intrinsic beliefs \cite{bindel2015bad}.
% This corresponds to how users expressed opinions in reality.
% Furthermore, Bindel \etal, gave a game theoretic illustration to the updating process of the FJ model.
% They showed that a user's express opinion minimized the difference among those of his/her neighbors and his/her own intrinsic beliefs.
In addition, the FJ model was validated by small and medium group of social experiments \cite{friedkin2011formal}.
% Therefore, in this work, we based on the FJ model and study extreme opinions in social network.

\textbf{Opinion Control.} Based on the above opinion dynamics models, there were many works studying how to control public opinion.
These works mainly focused on two objectives: the total opinion and the opinion polarization.
The total opinion characterized the overall stance of the whole population.
The work in \cite{kempe2003maximizing} first studied the problem of maximizing total opinion in the discrete opinion dynamics models.
They proposed a greedy algorithm to select a certain number of ``seed users'' that can maximized the total opinion.
The work in \cite{gionis2013opinion} further extended this algorithm to the continuous opinion dynamics models.
The total opinion was optimized by controlling users' stubbornness in \cite{abebe2018opinion} and by manipulating users' intrinsic beliefs through persuasion in \cite{xu2020opinion}.
Opinion polarization quantified the discrepancies among users’ opinions.
The work in \cite{matakos2017measuring} minimized the opinion polarization by controlling users' intrinsic believes.
Furthermore, the works in \cite{chen2018quantifying} and \cite{musco2018minimizing} studied how network structure influence the opinion polarization, respectively.
Chen \etal. studied how to maximize the opinion polarization from the adversary's perspective \cite{chen2021adversarial,gaitonde2020adversarial}.
Although the above algorithms could efficiently control the total opinion or the opinion polarization, they were developed based on the assumption that the network structure was static.
That is, the connections and influence strength among all users did not change during the opinion dynamics process.

\textbf{Dynamic Network Structure.} In reality, the network structure may change from time to time \cite{greene2010tracking,pereira2016evolving}.
The work in \cite{barabasi1999emergence} proposed the preferential attachment algorithm that the newly added user were more likely to follow those with more followers. The preferential attachment could explain the power-law degree distribution of social networks in reality.
Some works explored the data from social network platforms and analyzed how real social networks changed over time.

The work in \cite{chitra2020analyzing} showed that the recommending algorithms in social network platforms tended to connect users with similar interests together.
This could further resulted in the ``filter bubble'' effect \cite{pariser2011filter} that the opinion polarization was increased.
The work in \cite{ceci2018small} proposed perturbation centrality based on the perturbation analysis of the network structure.
Furthermore, the perturbation centrality was used in graph-based optimization problem, which was proved to be robust against edge failure.

Different from the prior works on opinion control, which assumed that the network structure is static, in this work, we consider the scenario where some attackers can perturb the network structure by spread their extreme opinions to other target users. We model such adversarial network perturbation. From the adversary's perspective, we aim to analyze the optimal strategy for the adversary to choose the attackers and the corresponding target users. Then, from the network defender's perspective, we device a opinion control scheme which is robust to the adversarial network perturbation.
Our investigation is important to the development of robust online social networks.

\subsection{Our Contribution}
Our contributions can be summarized as follows:
\begin{enumerate}
    \item We model the network perturbation and theoretically analyze its impact on the total opinion.
    \item We consider the scenario where the adversary can exploit the network perturbation to maximize the total opinion, while the network defender aims to minimize the total opinion under such adversarial network perturbation. We model such scenario as a Stackelberg game, played between the network defender and the adversarial.
    \item From the adversary's perspective, we theoretically analyze the optimal strategy to choose the attacker and the corresponding target users, so that the total opinion is maximized.
    \item From the network defender's perspective, we formulate a min-max game which aims to minimize the total opinion, even if the network is perturbed by the adversary with their optimal strategy. We further device a project subgradient method to efficiently solve the formulated min-max game.
    \item We conduct extensive experiments on real social networks to validate the analysis of the network perturbation, and the optimal strategy for the adversary. We also use real social networks to test the defense algorithms.
\end{enumerate}

The rest of this paper is organized as follows. In \refsec{sec:prelim}, we introduce the basics of the Friedkin-Johnsen opinion dynamics models and opinion control. In \refsec{sec:perturbation}, we model the network perturbation, and formulate the Stackelberg game played between the adversary and the network defender. Then, in \refsec{sec:attack}, we analyze the optimal strategy for the adversary to maximize the total opinion. Next, in \refsec{sec:defense}, from the defender's perspective, we solve the stackelberg game to minimize the total opinion under the adversarial network perturbation.
The simulation and experiments results are summarized in \refsec{sec:exp}, and the conclusion is drawn in \refsec{sec:conclusion}.
\section{Preliminary}
\label{sec:prelim}
In this section, we briefly review the Friedkin-Johnsen (FJ) opinion dynamics model, and introduce the opinion minimization problem studied in this paper. %The notations used in this paper are summarized in \reftab{tab:notation}.
% \begin{table}
%   \centering
%   \caption{Notations}
%   \begin{tabular}{cc}
%     \hline
%     Notation & Meaning \\% Header
%     \hline
%     - & - \\
%     \hline
%   \end{tabular}
%   \label{tab:notation}
% \end{table}

\subsection{The Friedkin-Johnsen Model}
In this work, we consider the scenario that there are $n$ users in network, discussing a topic that is harmful to public security.
The network can be modeled as a graph, where nodes represent users, and an directed edge $(i,j)$ indicates that user-$j$ can influence user-$i$.
Let $\mb{W}$ be the adjacency matrix of the graph, whose entry $W_{ij} > 0$ shows the influence weight of user $j$ on user $i$ if there is an directed edge $(i, j)$; otherwise, $W_{ij} = 0$.
Following the original FJ model \cite{friedkin1990social}, for any user $i$, the total influence weight of all other users on user $v$ are unified to 1. That is, $\mb{W}\mb{1} = \mb{1}$.

In the FJ model, user $i$ holds internal opinion $s_i$ which shows his/her intrinsic belief on the discussed topic.
Let $\mb{s} = [s_1, \cdots, s_n]^T$ be all users' internal opinions.
The opinion formation process is divided into discrete time steps.
At time step $t$, the expressed opinions of users are $\mb{z}(t) = [z_1(t), \cdots, z_n(t)]^T$.
Each user aggregates his/her neighbors' expressed opinions together with his/her own internal opinion, and updates the expressed opinion at the next time step as
\begin{equation}
\label{eqn:fj-dyn-one-user}
    % z_i(t+1) = \frac{s_i + \sum_{j\in \mathcal{N}_i} W_{ij} z_j(t)}{1 + d_i}
    z_i(t+1) = \alpha_i s_i + (1 - \alpha_i) \cdot \sum_{j} W_{ij} z_j(t),
\end{equation}
% where $d_i = \sum_{j\in \mathcal{N}_i} W_{ij}$ shows how much user-$i$ is influenced by other users.
where $0< \alpha_i < 1$ is the stubbornness of user $i$. The larger $\alpha_i$ is, user $i$ is more stubborn and follows his/her own intrinsic belief more.
% In the FJ model, all users update their expressed opinions together, and the dynamics equations in matrix form is
% \begin{equation}
% \label{eqn:fj-dyn}
    % \mb{z}(t+1) = (\mb{I} + \mb{D})^{-1} \cdot (\mb{s} + \mb{W}\mb{z}(t)),
    % \mb{z}(t+1) = \mb{A} \mb{s} + (\mb{I} - \mb{A})\mb{W} \mb{z}(t),
% \end{equation}
% where $\mb{A} = \Diag{\alpha_1, \cdots, \alpha_n}$.
% where $\mb{D} = \Diag{d_1, \cdots, d_n}$.

The expressed opinions of the whole population evolve as the above updating process, and finally reach the equilibrium state \cite{friedkin1990social}
\begin{gather}
    % \mb{z}^* = (\mb{L} + \mb{I})^{-1} \cdot \mb{s},
    \mb{z}^* = \mb{B} \mb{A} \mb{s},
    \label{eqn:fj-equilibrium}
    \text{ where }  \\
    \mb{B} = [\mb{I} - (\mb{I} - \mb{A})\mb{W}]^{-1} \text{ and } \mb{A} = \Diag{\alpha_i}.\label{eqn:B}
\end{gather}
% Where $\mb{L} = \mb{D} - \mb{W}$ is the weighted graph laplacian matrix.
From \refeq{eqn:fj-equilibrium}, the equilibrium expressed opinions depends on the network structure $\mb{W}$, all users internal opinions $\mb{s}$, and their stubbornness $\mb{A}$.

\subsection{Opinion Control}
As the total opinion shows the whole population's support of the discussed topic, following the previous works in \cite{xu2020opinion,abebe2018opinion,chan2019revisiting}, we aim to minimize the total opinion at equilibrium, that is
\begin{equation}
    % f = \mb{1}^T \mb{z}^* = \mb{1}^T (\mb{I} - (\mb{I} - \mb{A})\mb{W})^{-1} \mb{A} \mb{s},
    f = \mb{1}^T \mb{z}^* = \mb{1}^T \mb{B} \mb{A} \mb{s}
    % f = \mb{1}^T \mb{z}^* = \mb{1}^T (\mb{L} + \mb{I})^{-1} \mb{s}.
    \label{eqn:ori-total-opinion}
\end{equation}
In the following, we refer to $f$ as total opinion for short when no confusions are made.
We adopt the ``Min-Total'' algorithms in \cite{xu2020opinion} as the baseline methods.
In the ``Min-Total'' algorithm, users internal opinions can be controlled through e.g., persuasion, but limited to a certain budget $\mu$. Consequently, the total opinion minimization problem can be formulated as
\begin{equation}
\begin{aligned}
    \min_{\mb{x}}   & \quad \mb{1}^T \mb{B} \mb{A} \mb{x}\\
    s.t.            & \quad 
    \left\{\begin{array}{lc}
      \mb{1}^T (\mb{s} - \mb{x}) \le \mu,\\
      \quad \mb{0} \le \mb{x} \le \mb{s}.
    \end{array}\right.
\end{aligned}
\label{eqn:opinion-minimization}
\end{equation}
The formulated minimization is convex, and can be solved efficiently using linear searching algorithm \cite{xu2020opinion}.
The above algorithm can obtain the optimal control when the network structure $\mb{W}$ is static and unchanged. However, both evidence in literature \cite{barabasi1999emergence} and real data \cite{greene2010tracking,pereira2016evolving} have shown that the network structure may change during the opinion formation process. In the following, we model the change of network structure, and analyze its impact on the total opinion.
% We assume that the discussed topic is related to anti-social issue, and thus the goal is to minimize the total opinion at equilibrium.
\section{Adversarial Network Perturbation and Problem Formulation}
\label{sec:perturbation}
In online social network platforms, a user $v$ can easily explore other users opinion, e.g., trending tweets on the exploration page of Twitter and private messages from strangers on Instagram, even though they are not $v$'s neighbors.
Some malicious users may exploit this nature of online social network platforms and intentionally spread their own ideas to the public.
We define this intentional change in network structure as \text{adversarial network perturbation}.
In this section, we model the adversarial network perturbation, and formulate the robust opinion minimization problem under such perturbation.

\subsection{Adversarial Network Perturbation Model}
\label{sec:net-perturb}
In this work, we consider the scenario where the adversary aims to promote the support of the discussing harmful topic, and try to maximize the total opinion of the network.
We assume that the adversary can manipulate at most $m$ users as the \text{attackers}, and push the expressed opinions of these attackers to some target users.
Let $\mathcal{A}$ and $\mathcal{T}$ be the set of attackers and that of target users, respectively.
For an attacker $u\in \mathcal{A}$, the target user of him/her is denoted by $\mathcal{T}_u$.
In this work, we assume that each attacker can push his/her expressed opinions to at most $k$ target users, that is, $\vert \mathcal{T}_u \vert \le k ~ \forall u\in \mathcal{A}$.
For a target user $v \in \mathcal{T}$, let $\mathcal{A}_v$ be the attackers that influences this target user. Consequently, we have
\begin{equation}
    \begin{aligned}
    &\bigcup_{u \in \mathcal{A}} \mathcal{T}_u = \mathcal{A}, \text{ and}
    &\bigcup_{v \in \mathcal{T}} \mathcal{A}_v = \mathcal{T}.
    \end{aligned}
    \label{eqn:set-relation}
\end{equation}

For a target user $v\in \mathcal{T}$, when an attacker $u\in \mathcal{A}_v$ pushes his/her expressed opinion to $v$, we assume that it influences the target user $v$ with weight $p$. We also define $p$ as the perturbation coefficient. To ensure that the total influence weights of other users on the target user $v$ is unified to 1, the influence weights of $v$'s original neighbors is discounted by $(1 - \vert \mathcal{A}_v \vert p)$, and $\vert \mathcal{A}_v \vert p \le 1$. Consequently, the influence weights on target user $v$ becomes
\begin{equation}
    \tilde{W}_{vu} = \begin{cases}
        (1 - \vert \mathcal{A}_v \vert p)\cdot W_{vu} & u \notin \mathcal{A}_v;\\
        (1 - \vert \mathcal{A}_v \vert p)\cdot W_{vu} + p & u \in \mathcal{A}_v.
    \end{cases}
\end{equation}
For other users $\bar{v} \notin \mathcal{T}$, others users' influence weights on him/her do not change.
The adjacency matrix after perturbation becomes
\begin{equation}
    \tilde{\mb{W}} = \mb{W} + p \times \mb{\Delta}_{W},\text{ where}
    \label{eqn:new-W}
\end{equation}
\begin{equation}
    \mb{\Delta}_{W} = \sum_{v \in \mathcal{T}} \mb{e}_v \sum_{u \in \mathcal{A}_v} \mb{e}_u^T - \sum_{v \in \mathcal{T}} \vert \mathcal{A}_v\vert \mb{e}_v \mb{e}_v^T\mb{W}.
    \label{eqn:delta-w}
\end{equation}

Since the adjacency matrix has changed to $\tilde{\mb{W}}$, with \refeq{eqn:fj-equilibrium}, the expressed opinion at equilibrium becomes
\begin{equation}
    \tilde{\mb{z}^*} = \tilde{\mb{B}} \mb{A} \mb{s}, \text{ where } \tilde{\mb{B}} = (\mb{I} - (\mb{I} - \mb{A})\tilde{\mb{W}})^{-1},
    \label{eqn:perturb-opinion}
\end{equation}
and the total opinion becomes
\begin{equation}
    \tilde{f} = \mb{1}^T \tilde{\mb{B}} \mb{A} \mb{s}.
\end{equation}

\subsection{Network Defense under Adversarial Perturabtion}
Owing to the existence of the potential adversarial network perturbation, in this work, we formulate the following \textit{network defense} problem.
\begin{equation}
  \begin{aligned}
    \min_{\mb{x}} & \quad\max_{\mathcal{A}, \mathcal{T}_u} \mb{1}^T \tilde{\mb{B}} \mb{A} \mb{x}\\
    s.t.          & \quad 
    \left\{\begin{array}{lc}
      \mb{1}^T (\mb{s} - \mb{x}) \le \mu,\\
      \mb{0} \le \mb{x} \le \mb{s},\\
      \vert \mathcal{A} \vert \le m,\\
      \vert \mathcal{T}_u \vert \le k,~ \forall u\in \mathcal{A}.
    \end{array}\right.
  \end{aligned}
  \label{eqn:rocpi}
\end{equation}

The formulated network defense problem in \refeq{eqn:rocpi} can be interpreted as a zero-sum game played between the adversary and the defender.
The adversary can perturb the network as analyzed in \refsec{sec:net-perturb} to maximize the total opinion.
This corresponds to the inner maximization of the objective function in \refeq{eqn:rocpi}.
For the defender, similar to the prior works in \cite{xu2020opinion,musco2018minimizing}, we assume that they control users' innate opinions with a certain budget to minimize the total opinion.
% when they are aware the existence of the adversarial network perturbation and aims to control the worst case.
Different from the opinion minimization problem in  \refeq{eqn:opinion-minimization}, here, the defender is aware of the existence of the adversary, and assume that the adversary is rational to maximize the total opinion. Thus, the objective of the defender in \refeq{eqn:rocpi} is the maximized total opinion by the adversary.

To solve the proposed network defense problem under the adversarial network perturbation, in the following, we first analyze from the adversary's perspective to find the optimal solution to the inner maximization of \refeq{eqn:rocpi}.
Then, based on the analytical solution to the inner maximization problem, we further develop efficient algorithm to solve \refeq{eqn:rocpi}.

\section{Optimal Adversarial Network Perturbation}
\label{sec:attack}
In this section, from the adversary's perspective, given that the controlled internal opinions are $\mb{s}$, we study how to maximize the total opinion by selecting attackers $\mathcal{A}$ and the corresponding target users $\mathcal{T}_u$ for $u\in \mathcal{A}$.
That is, we solve the following problem
\begin{equation}
    \begin{aligned}
    \max_{\mathcal{A}, \mathcal{T}_u} &\quad\quad\mb{1}^T \tilde{\mb{B}} \mb{A} \mb{s}\\
    s.t.          & \quad 
    \left\{\begin{array}{lc}
      \vert \mathcal{A} \vert \le m,\\
      \vert \mathcal{T}_u \vert \le k,~ \forall u\in \mathcal{A}.
    \end{array}\right.
  \end{aligned}
  \label{eqn:maximization}
\end{equation}
The first constraint in the above shows that the adversary can select up to $m$ attackers, while the second constraints indicates that each attacker $u\in\mathcal{A}$ can perturb up to $k$ target users.

Note that the above maximization problem is a combinatorial problem.
In addition, the optimization variables $\mathcal{A}$ and $\mathcal{T}_u$ are included in $\tilde{\mb{B}}$. From \refeq{eqn:perturb-opinion}, the outer inverse further hinder the solution of the maximization.
A direct method is to use the greedy algorithm as in \cite{kempe2003maximizing}, to iteratively select the attackers and the target users.
However, computational cost is substantial, and it is unacceptable when the network size is sufficiently large.

In the following of this section, we first approximate the objective function in \refeq{eqn:maximization}.
Then, based on the approximated objective function, we develop a linear search algorithm to solve the approximated maximization problem.

\subsection{Approximation of Objective}
Note that in real social network, users are often less likely to be influenced by users who are not their friends. %\red{[]}.
Consequently, the impact of the adversarial network perturbation is limited.
In our model, this corresponds to the perturbation coefficient $p$ is sufficient small, that is, $p \rightarrow 0$.  Based on the assumption that $p \rightarrow 0$, we have the following proposition.
\begin{proposition}
  \label{prop:approx-new-total}
  when the network structure changed to $\tilde{\mb{W}}$, the total opinion with perturbation  can be approximated as
  \begin{equation}
    \tilde{f} \approx \mb{1}^T \mb{B} \mb{A} \mb{s} + p\times \mb{1}^T \mb{B} (\mb{I} - \mb{A}) \mb{\Delta}_{W} \mb{B} \mb{A} \mb{s}
    \label{eqn:approx-new-total}
  \end{equation}
\end{proposition}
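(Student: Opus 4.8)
The plan is to treat $\tilde{\mb{B}}$ as a perturbation of $\mb{B}$ and expand it to first order in $p$ through a resolvent (Neumann) series, then left-multiply by $\mb{1}^T$ and right-multiply by $\mb{A}\mb{s}$. First I would introduce $\mb{M} = \mb{I} - (\mb{I} - \mb{A})\mb{W}$, so that $\mb{B} = \mb{M}^{-1}$ by \refeq{eqn:B}. Substituting the perturbed adjacency matrix $\tilde{\mb{W}} = \mb{W} + p\,\mb{\Delta}_{W}$ from \refeq{eqn:new-W} into the definition of $\tilde{\mb{B}}$ in \refeq{eqn:perturb-opinion}, the matrix to be inverted becomes
\begin{equation*}
  \tilde{\mb{M}} = \mb{I} - (\mb{I} - \mb{A})\tilde{\mb{W}} = \mb{M} - p\,(\mb{I} - \mb{A})\mb{\Delta}_{W}.
\end{equation*}
Thus the entire $p$-dependence is isolated in a single additive term, which is the key structural observation that makes the expansion tractable.

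Next I would factor $\mb{M}$ out on the left, writing $\tilde{\mb{M}} = \mb{M}\,[\mb{I} - p\,\mb{B}(\mb{I} - \mb{A})\mb{\Delta}_{W}]$, so that
\begin{equation*}
  \tilde{\mb{B}} = \tilde{\mb{M}}^{-1} = [\mb{I} - p\,\mb{B}(\mb{I} - \mb{A})\mb{\Delta}_{W}]^{-1}\,\mb{B}.
\end{equation*}
For $p$ small enough that the spectral radius of $p\,\mb{B}(\mb{I} - \mb{A})\mb{\Delta}_{W}$ lies below one, the bracketed inverse admits the Neumann expansion $\mb{I} + p\,\mb{B}(\mb{I} - \mb{A})\mb{\Delta}_{W} + O(p^2)$, which gives
\begin{equation*}
  \tilde{\mb{B}} = \mb{B} + p\,\mb{B}(\mb{I} - \mb{A})\mb{\Delta}_{W}\mb{B} + O(p^2).
\end{equation*}
Left-multiplying by $\mb{1}^T$, right-multiplying by $\mb{A}\mb{s}$, and discarding the $O(p^2)$ remainder then yields \refeq{eqn:approx-new-total} exactly.

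The main obstacle I anticipate is making the error control rigorous, namely verifying that $\tilde{\mb{M}}$ is invertible and that the Neumann series converges over the relevant range of $p$. Since $\mb{B}$ exists by the FJ equilibrium condition and $\mb{\Delta}_{W}$ has bounded entries determined by the attacker and target sets through \refeq{eqn:delta-w}, the operator $\mb{B}(\mb{I} - \mb{A})\mb{\Delta}_{W}$ has finite norm; hence the assumption $p \rightarrow 0$ guarantees spectral radius below one, and with it both invertibility and convergence. Every other step is purely algebraic, so this convergence-and-invertibility justification is the only genuinely delicate point, with the rest following from the first-order truncation of the resolvent.
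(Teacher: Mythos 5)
Your proof is correct, and it reaches \refeq{eqn:approx-new-total} by a genuinely different mechanism than the paper. The paper's proof is calculus-based: it writes the Taylor expansion $\tilde{f} = \tilde{f}\vert_{p=0} + p\,\frac{d\tilde{f}}{dp}\vert_{p=0} + \mathcal{O}(p^2)$ and obtains the derivative $\frac{d\tilde{\mb{B}}}{dp} = \tilde{\mb{B}}(\mb{I}-\mb{A})\mb{\Delta}_W\tilde{\mb{B}}$ by implicitly differentiating the identity $\tilde{\mb{B}}\,(\mb{I} - (\mb{I}-\mb{A})\tilde{\mb{W}}) = \mb{I}$, then evaluates at $p=0$. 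You instead work purely algebraically: factor $\tilde{\mb{M}} = \mb{M}\,[\mb{I} - p\,\mb{B}(\mb{I}-\mb{A})\mb{\Delta}_W]$ and expand the bracketed inverse as a Neumann series, truncating at first order. The two routes produce the identical first-order term, as they must, but they buy different things. Your factorization makes explicit exactly what the paper leaves implicit: that $\tilde{\mb{B}}$ exists for small $p$ (spectral radius of $p\,\mb{B}(\mb{I}-\mb{A})\mb{\Delta}_W$ below one) and that the discarded remainder really is $\mathcal{O}(p^2)$, since the Neumann tail can be bounded in norm. The paper's implicit-differentiation route is shorter and gives the derivative formula at arbitrary $p$ (not just $p=0$), which could be reused for higher-order corrections, but it tacitly assumes differentiability and invertibility of $\tilde{\mb{B}}$ along the way. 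One small refinement to your last paragraph: a finite norm of $\mb{B}(\mb{I}-\mb{A})\mb{\Delta}_W$ alone suffices, since any fixed finite bound multiplied by $p \rightarrow 0$ eventually drops below one; there is no additional delicacy hiding there, so your "only genuinely delicate point" is in fact routine.
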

\begin{proof}
  Using the Taylor series, we have
  $$
  \tilde{f} = \left.\tilde{f}\right\vert_{p=0} + p \times \left.\frac{d \tilde{f}}{dp} \right\vert_{p=0} + \mathcal{O}(p^2).
  $$
  When $p = 0$, with $\tilde{\mb{W}}$ in \refeq{eqn:new-W}, $\tilde{f} = f = \mb{1}^T \mb{B} \mb{A} \mb{s}$. To calculate the second term in the above, we first calculate $\frac{d \tilde{f}}{dp}$.
  $$
    \begin{aligned}
      \frac{d \tilde{f}}{dp} &= \frac{d(\mb{1}^T \tilde{\mb{B}} \mb{A} \mb{s})}{dp} = \mb{1}^T \frac{d \tilde{\mb{B}}}{dp} \mb{A} \mb{s}.
    \end{aligned}
  $$
  As $\tilde{\mb{B}} = (\mb{I} - (\mb{I} - \mb{A})\tilde{\mb{W}})^{-1}$, and we have $\tilde{\mb{B}}\cdot (\mb{I} - (\mb{I} - \mb{A})\tilde{\mb{W}}) = \mb{I}$. Taking derivatives w.r.t. $p$ on both sides, we have
  $$
  d\tilde{\mb{B}} (\mb{I} - (\mb{I} - \mb{A})\tilde{\mb{W}}) - \tilde{\mb{B}} (\mb{I} - \mb{A}) d \tilde{\mb{W}} = \mb{0}.
  $$
  With $\tilde{\mb{W}}$ in \refeq{eqn:delta-w}, we have $d \tilde{\mb{W}} = \mb{\Delta}_{W} dp$. Then, we have
  $$
  \begin{aligned}
    & d\tilde{\mb{B}} (\mb{I} - (\mb{I} - \mb{A})\tilde{\mb{W}}) - \tilde{\mb{B}} (\mb{I} - \mb{A}) \mb{\Delta}_{W} dp = \mb{0}\\
    \Rightarrow &\frac{d\tilde{\mb{B}}}{dp} = \tilde{\mb{B}} (\mb{I} - \mb{A}) \mb{\Delta}_{W} (\mb{I} - (\mb{I} - \mb{A})\tilde{\mb{W}})^{-1} \\
    &\hspace{15pt}= \tilde{\mb{B}} (\mb{I} - \mb{A}) \mb{\Delta}_{W} \tilde{\mb{B}}
  \end{aligned}
  $$
  With the above derivatives we have
  $$
  \begin{aligned}
    \left.\frac{d \tilde{f}}{dp} \right\vert_{p=0} = \mb{1}^T \left(\frac{d \tilde{\mb{B}}}{dp}\right)_{p=0} \mb{A} \mb{s}
    % = \mb{1}^T \left( \tilde{\mb{B}} (\mb{I} - \mb{A}) \mb{\Delta}_{W} \tilde{\mb{B}} \right)_{p=0} \mb{A} \mb{s} \\
    = \mb{1}^T \mb{B} (\mb{I} - \mb{A}) \mb{\Delta}_{W} \mb{B} \mb{A} \mb{s}.
  \end{aligned}
  $$
  Ignoring the $\mathcal{O}(p^2)$ term, we have \refeq{eqn:approx-new-total}
\end{proof}

Note that the first term in \refeq{eqn:approx-new-total} is the total opinion $f$ when there is no adversarial network perturbation. Therefore, the total opinion with network perturbation can be regarded as $f$ plus a perturbation term related to how network structure is changed, i.e., $\mb{\Delta}_{W}$. The contribution of the perturbation term to the total opinion is controlled by the perturbation coefficient $p$.
When $p$ is larger, the total opinion at equilibrium deviates more from the original one.

\subsection{Optimal Selection of the Attackers and the Target Users}
\label{sec:opt-attack}
With the above approximation, in this section, we derive the optimal adversarial network perturbation.
That is, we find the optimal set of attackers $\mathcal{A}^*$. Furthermore, for each optimal attacker $u\in\mathcal{A}^*$, we find $u$'s optimal target user set $\mathcal{T}_u^*$.
% their corresponding optimal target user set $\mathcal{T}^*$, and how each target user is influenced by the the attackers, that is $\mathcal{A}^*_v$ for each target user $v\in \mathcal{T}$.

In \refeq{eqn:approx-new-total}, the first term is neither related to the selection of attackers nor target users. Thus, the maximization in \refeq{eqn:maximization} is equivalent to maximizing the perturbation term, that is,
\begin{equation}
  \begin{aligned}
    \max_{\mathcal{A}, \mathcal{T}_u} &\quad \mb{1}^T \mb{B} (\mb{I} - \mb{A}) \mb{\Delta}_{W} \mb{B} \mb{A} \mb{s}\\
    s.t.          & \quad 
    \left\{\begin{array}{lc}
      \vert \mathcal{A} \vert \le m,\\
      \vert \mathcal{T}_u \vert \le k,~ \forall u\in \mathcal{A}.
    \end{array}\right.
  \end{aligned}
  \label{eqn:max-perturb-term}
\end{equation}

% \begin{equation}
%     \begin{aligned}
%         \max_{\mathcal{A}, \mathcal{T}} & \quad \mb{1}^T \mb{B} (\mb{I} - \mb{A}) \mb{\Delta}_{W} \mb{B} \mb{A} \mb{s}\\
%         s.t. & \quad
%         \left\{\begin{array}{lc}
%             \vert \mathcal{A} \vert \le m,\\  
%             \vert \mathcal{T}_u \vert \le k & \forall u \in \mathcal{A},  
%         \end{array}\right.
%     \end{aligned}
%     \label{eqn:attack-optim}
% \end{equation}
% where we simplify the objective $\tilde{f}$ by only keeping the perturbation term.
% The first constraint in \refeq{eqn:attack-optim} shows that there are at most $m$ attackers in the network, and the second one shows that each attacker can perturb at most $k$ target users.
Then, with the change of adjacency matrix in \refeq{eqn:delta-w}, the above objective can be further written as
\begin{align}
    &\mb{1}^T \mb{B} (\mb{I} - \mb{A}) \mb{\Delta}_{W} \mb{B} \mb{A} \mb{s} \\
    = &\mb{1}^T \mb{B} (\mb{I} - \mb{A}) \left( \sum_{v \in \mathcal{T}} \mb{e}_v \sum_{u \in \mathcal{A}_v} \mb{e}_u^T - \sum_{v \in \mathcal{T}} \vert \mathcal{A}_v\vert \mb{e}_v \mb{e}_v^T \mb{W} \right) \mb{B} \mb{A} \mb{s}\nonumber\\
    =& \sum_{v \in \mathcal{T}} \mb{1}^T \mb{B} (\mb{I} - \mb{A}) \mb{e}_v \sum_{u \in \mathcal{A}_v} \mb{e}_u^T \mb{B} \mb{A} \mb{s}\\ 
    &- \sum_{v \in \mathcal{T}} \vert \mathcal{A}_v\vert \mb{1}^T \mb{B} (\mb{I} - \mb{A}) \mb{e}_v \mb{e}_v^T\mb{W} \mb{B} \mb{A} \mb{s}\nonumber
    % &= \sum_{v \in \mathcal{T}} c_1(v) \sum_{u \in \mathcal{A}_v} z^*(u) - \sum_{v \in \mathcal{T}} \vert \mathcal{A}_v\vert c_1(v) c_2(v)\nonumber\\
    % &= \sum_{v \in \mathcal{T}} c_1(v)\cdot \left(\sum_{u \in \mathcal{A}_v} z^*(u) - \vert \mathcal{A}_v \vert c_2(v)\right)\nonumber\\
    % &= \sum_{v \in \mathcal{T}} c_1(v)\cdot \sum_{u \in \mathcal{A}_v} \big(z^*(u) - c_2(v)\big)\nonumber\\
    % &= \sum_{v \in \mathcal{T}} \sum_{u \in \mathcal{A}_v} c_1(v)\cdot \big(z^*(u) - c_2(v)\big)\label{eqn:perturb-target-view}\\
    % &=\sum_{u \in \mathcal{A}} \sum_{v \in \mathcal{T}_u} c_1(v)\cdot \big(z^*(u) - c_2(v)\big)
    % \label{eqn:perturb-attack-view}
\end{align}
Note that from \refeq{eqn:fj-equilibrium}, we have $\mb{z}^* = \mb{B}\mb{A}\mb{s}$. We further define
\begin{equation}
    \begin{aligned}
        \mb{c}_1 &\triangleq (\mb{1}^T \mb{B} (\mb{I} - \mb{A}))^T, \text{ and}\\
        \mb{c}_2 &\triangleq \mb{W} \mb{z}^* = \mb{W} \mb{B} \mb{A} \mb{s}.
    \end{aligned}
    \label{eqn:defined-vectors}
\end{equation}
Consequently, the above objective can be further simplified as
\begin{align}
  &\mb{1}^T \mb{B} (\mb{I} - \mb{A}) \mb{\Delta}_{W} \mb{B} \mb{A} \mb{s}\\
  = &\sum_{v \in \mathcal{T}} c_1(v) \sum_{u \in \mathcal{A}_v} z^*(u) - \sum_{v \in \mathcal{T}} \vert \mathcal{A}_v\vert c_1(v) c_2(v)\nonumber\\
  =& \sum_{v \in \mathcal{T}} c_1(v)\cdot \left(\sum_{u \in \mathcal{A}_v} z^*(u) - \vert \mathcal{A}_v \vert c_2(v)\right)\nonumber\\
  =& \sum_{v \in \mathcal{T}} c_1(v)\cdot \sum_{u \in \mathcal{A}_v} \big(z^*(u) - c_2(v)\big)\nonumber\\
  =& \sum_{v \in \mathcal{T}} \sum_{u \in \mathcal{A}_v} c_1(v)\cdot \big(z^*(u) - c_2(v)\big)\label{eqn:perturb-target-view}\\
  =&\sum_{u \in \mathcal{A}} \sum_{v \in \mathcal{T}_u} c_1(v)\cdot \big(z^*(u) - c_2(v)\big).
  \label{eqn:perturb-attack-view}
\end{align}
From \refeq{eqn:perturb-target-view} to \refeq{eqn:perturb-attack-view}, we use the relationships in \refeq{eqn:set-relation} and exchange the summation order.
From \refeq{eqn:perturb-attack-view}, when an attacker $u$ influences a target user $v$, such perturbation contributes $\delta_{u,v} \triangleq c_1(v)\cdot \big(z^*(u) - c_2(v)\big)$ to the change of total opinion. Thus, we define $\delta_{u,v}$ as the \textit{meta-influence}.
Furthermore, for the attacker $u$, his/her total influence on the change of total opinion is
\begin{equation}
  \delta_u \triangleq \sum_{v \in \mathcal{T}_u} \delta_{u,v} =\sum_{v \in \mathcal{T}_u} c_1(v)\cdot \big(z^*(u) - c_2(v)\big)
  \label{eqn:attacker-influence}
\end{equation}
We also call $\delta_u$ the \textit{individual influence} of user $u$, if he/she is selected as the attacker.

With the derivation in \refeq{eqn:perturb-attack-view}, the maximization in \refeq{eqn:max-perturb-term} can be transformed as
$$
  \begin{aligned}
    &\max_{\mathcal{A}, \mathcal{T}_u} \quad \mb{1}^T \mb{B} (\mb{I} - \mb{A}) \mb{\Delta}_{W} \mb{B} \mb{A} \mb{s}\\
    =&\max_{\mathcal{A}, \mathcal{T}_u} \sum_{u \in \mathcal{A}} \sum_{v \in \mathcal{T}_u} c_1(v)\cdot \big(z^*(u) - c_2(v)\big)\\
    =&\max_{\mathcal{A}} \sum_{u \in \mathcal{A}} \left(\max_{\mathcal{T}_u}\sum_{v \in \mathcal{T}_u} c_1(v)\cdot \big(z^*(u) - c_2(v)\big)\right)\\
    =&\max_{\mathcal{A}} \sum_{u \in \mathcal{A}} \max_{\mathcal{T}_u} \delta_u.
  \end{aligned}
$$
To this end, the maximization in \refeq{eqn:max-perturb-term} can be decoupled as a two-step maximization problem.
The first step is to maximize the user $u$'s individual influence and choose his/her corresponding target user set $T^*_u$, if he/she is selected as the attacker. That is,
\begin{equation}
  \begin{aligned}
    \max_{\mathcal{T}_u} &\quad \delta_u = \sum_{v \in \mathcal{T}_u} c_1(v)\cdot \big(z^*(u) - c_2(v)\big)\\
    s.t.          & \quad \vert\mathcal{T}_u \vert \le k.
  \end{aligned}
  \label{eqn:max-attacker-influence}
\end{equation}
By solving \refeq{eqn:max-attacker-influence} for each user $u$, we obtain his/her optimal target user set $\mathcal{T}_u^*$ as well as his/her optimal individual influence $\delta_u^*$. Then, the second step is to solve
\begin{equation}
  \begin{aligned}
    \max_{\mathcal{A}} &\quad \sum_{u \in \mathcal{A}} \max_{\mathcal{T}_u} \delta_u = \sum_{u \in \mathcal{A}} \delta_u^*\\
    s.t.          & \quad \vert \mathcal{A} \vert \le m,
  \end{aligned}
  \label{eqn:max-total-influence}
\end{equation}
to obtain the optimal attacker set $\mathcal{A}^*$. In the following, we analyze the solutions to \refeq{eqn:max-attacker-influence} and \refeq{eqn:max-total-influence}.

\subsubsection{Optimal Selection of Target User}
According to the above analysis, selecting the users with the largest meta-influence $c_1(v)\cdot (z^*(u) - c_2(v))$ as the target user $T_u$ can maximize the individual influence $\delta_u$.
Recall the definition that $c_1 = (\mb{1}^T \mb{B} (\mb{I} - \mb{A}))^T$. Since $0 < \alpha_i < 1$, the diagonal of $(\mb{I} - \mb{A})$ is always positive.
From \refeq{eqn:B}, using the matrix identity, we have
\begin{equation}
  \mb{B} = [\mb{I} - (\mb{I} - \mb{A})\mb{W}]^{-1} = \mathbf{I} + (\mb{I} - \mb{A})\mb{W} + \big((\mb{I} - \mb{A})\mb{W}\big)^2 + \cdots.
\end{equation}
Consequently, all entries of $\mb{B}$ and $\mb{1}^T \mb{B}$ are positive. Thus $c_1(v)$ is always positive.
Let $\hat{b}_i$ be the sum of the $i$-th column of $\mb{B}$. We can have $c_1(v) = \hat{b}_v \cdot (1 - \alpha_v)$.
Therefore, if user $v$ is easily influenced by others and has a smaller $\alpha_v$, he/she can have a larger value of $c_1(v)$, and thus, be selected as the target user.

Next, we analyze the second term $(z^*(u) - c_2(v))$. With \refeq{eqn:defined-vectors}, we can see that $c_2(v) = \sum W_{vj} z^*(j)$ is the weighted average of user $v$'s original neighbors opinions. If the neighbors of user $v$ have smaller expressed opinions (smaller $c_2(v)$), user $v$ is more likely to be selected as the target user by the attackers. Note that if a user $u$ is selected as the attacker, the size of his/her optimal target users $\vert \mathcal{T}_u^*\vert$ may be less than the constraint $k$. That is, selecting more target user can not always increase the individual influence $\delta_u^*$. This is because that $z^*(u)$ can be less than $c_2(v)$. Consequently, the term $(z^*(u) - c_2(v))$ and the meta-influence $c_1(v)\cdot (z^*(u) - c_2(v))$ can be negative. In this sense, the individual influence $\delta_u$ may decrease if we select this user $v$ as target user.

An extreme case is the user $\underline{u}$, who own the smallest expressed opinion when there is no adversarial perturbation, that is, $z^*(\underline{u}) \le z^*(u),~ \forall u$. We can have $(z^*(\underline{u}) - c_2(v)) \le 0$. Thus, user $\underline{u}$ can never be selected as the attacker.
The other extreme case is the user $\overline{u}$, who owns the largest expressed opinion when there is no adversarial perturbation, that is, $z^*(\overline{u}) \ge z^*(u),~ \forall u$. We can have $(z^*(\overline{u}) - c_2(v)) \ge 0$. Therefore, if user $\overline{u}$ will always select $k$ target users, it he/she is selected as the attacker. In fact, as we can see next, user $\overline{u}$ is always selected as the attacker.

% In summary, users who are easily to be influenced by others and who have neighbors with smaller expressed opinion are more likely to be chosen as the target users. This corresponds to our intuition that the attackers should select those easily-influenced users to increase their impact. In addition, if a 

\subsubsection{Optimal Selection of Attacker} Given the optimal target user selection criterion above, we can decide the optimal target user set $\mathcal{T}^*_u$ and the optimal individual influence $\delta_u^*$ for each user $u$. Then, we can select the users with the largest optimal individual influence $\delta_u^*$ as the optimal attackers $\mathcal{A}^*$ to solve \refeq{eqn:max-total-influence}. Note that, through this method, we first need to solve \refeq{eqn:max-attacker-influence} for each user, and the complexity is $\mathcal{O}(N\cdot N \log k)$. Next, we need to solve \refeq{eqn:max-total-influence} with complexity $N \log m$. Then, the total computational complexity is $\mathcal{O}(N\cdot (N \log k + \log m))$. We next analyze the properties of the selected attackers, and reduce the computtaional complexity.

\begin{proposition}
  \label{prop:choose-attacker}
  For a pair of users users $u_1, u_2 \in \mathcal{V}$, if $z^*(u_1) \ge z^*(u_2)$, then we have $\delta_{u_1}^* \ge \delta_{u_2}^*$. 
\end{proposition}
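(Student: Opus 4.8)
The plan is to reduce the claim to a termwise monotonicity of the meta-influence together with a feasibility-transfer argument. The key structural fact, already established in the preceding analysis of the optimal target-user selection, is that $c_1(v) > 0$ for every user $v$. Hence, for any fixed candidate target $v$, the meta-influence $\delta_{u,v} = c_1(v)\cdot\big(z^*(u) - c_2(v)\big)$ is increasing in the attacker's own equilibrium opinion $z^*(u)$, so the hypothesis $z^*(u_1) \ge z^*(u_2)$ yields $\delta_{u_1,v} \ge \delta_{u_2,v}$ for \emph{every} $v$ simultaneously.

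First I would recall from \refeq{eqn:max-attacker-influence} that $\delta_u^*$ is the maximum of $\sum_{v\in\mathcal{T}_u} \delta_{u,v}$ over all target sets subject only to the cardinality bound $\vert\mathcal{T}_u\vert \le k$; crucially, this feasible region is identical for $u_1$ and $u_2$ and does not otherwise depend on the identity of the attacker. Let $\mathcal{T}_{u_2}^*$ be an optimal target set attaining $\delta_{u_2}^*$. Since $\vert\mathcal{T}_{u_2}^*\vert \le k$, this set is also a feasible (though possibly suboptimal) target set for $u_1$. The main step is then the chain
$$\delta_{u_1}^* \;\ge\; \sum_{v\in\mathcal{T}_{u_2}^*} c_1(v)\cdot\big(z^*(u_1) - c_2(v)\big) \;\ge\; \sum_{v\in\mathcal{T}_{u_2}^*} c_1(v)\cdot\big(z^*(u_2) - c_2(v)\big) \;=\; \delta_{u_2}^*,$$
where the first inequality is the optimality of $\delta_{u_1}^*$ against the particular feasible set $\mathcal{T}_{u_2}^*$, the second is the termwise monotonicity above (each summand uses $c_1(v)>0$ and $z^*(u_1)\ge z^*(u_2)$), and the equality is the definition of $\delta_{u_2}^*$ on its own optimizer.

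I expect no genuine obstacle: the argument is the standard ``pointwise-larger objective over a shared feasible region implies a larger optimum.'' The only two points deserving an explicit sentence are that the sign $c_1(v)>0$ is uniform over all $v$ (otherwise monotonicity could fail on individual terms) and that the target-set constraint is attacker-independent (so that reusing $\mathcal{T}_{u_2}^*$ for $u_1$ is legitimate). A minor subtlety is that an optimal $\mathcal{T}_u^*$ may use fewer than $k$ targets, keeping only those with positive meta-influence; this is harmless, since the comparison above holds verbatim for whatever set $\mathcal{T}_{u_2}^*$ turns out to be.
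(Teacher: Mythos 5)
Your proposal is correct and follows essentially the same argument as the paper: evaluate $u_1$'s objective on $u_2$'s optimal target set $\mathcal{T}_{u_2}^*$ (feasible because the cardinality constraint is attacker-independent), then use $c_1(v)\ge 0$ and $z^*(u_1)\ge z^*(u_2)$ for termwise monotonicity. Your explicit remarks on the uniform positivity of $c_1(v)$ and on the shared feasible region are exactly the two facts the paper's proof relies on, just stated more carefully.
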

\begin{proof}
  Let $\mathcal{T}_{u_1}^*$ and $\mathcal{T}_{u_2}^*$ be the optimal target user set of $u_1$ and $u_2$, respectively. Since $\mathcal{T}_{u_1}^*$ is the optimal target user set, we have
  $$
    \begin{aligned}
      \delta_{u_1}^* &= \sum_{v \in \mathcal{T}^*_{u_1}} c_1(v)\cdot \big(z^*(u_1) - c_2(v)\big)\\
        &\ge \sum_{v \in \mathcal{T}^*_{u_2}} c_1(v)\cdot \big(z^*(u_1) - c_2(v)\big).
    \end{aligned}
  $$
  Note that $z^*(u_1) \ge z^*(u_2)$ and $c_1(v) \ge 0$ from the previous section, we have
  $$
    \begin{aligned}
      \delta_{u_1}^* &\ge \sum_{v \in \mathcal{T}^*_{u_2}} c_1(v)\cdot \big(z^*(u_1) - c_2(v)\big)\\
      &\ge \sum_{v \in \mathcal{T}^*_{u_2}} c_1(v)\cdot \big(z^*(u_2) - c_2(v)\big) = \delta_{u_2}^*.
    \end{aligned}
  $$
  Here, we have $\delta_{u_1}^*\ge \delta_{u_2}^*$.
\end{proof}
From \refprop{prop:choose-attacker}, we can see that if a user with a larger expressed opinion $z^*(u)$ withou perturbation, he/she can have a larger optimal individual influence, even without the knowledge of his/her optimal target user set. Consequently, we can have the following corollary.
\begin{corollary}
  \label{cor:opt-attacker-set}
  Let $\mathcal{A}_{[m]}$ be the set of top-$m$ users who own the largest expressed opinion without perturbation. That is, $\mathcal{A}_{[m]} = \{u_{[1]}, u_{[2]}, \cdots, u_{[m]}\}$, and $z^*([1]) \ge z^*([2]) \ge \cdots \ge z^*([m]) \ge z^*(v)$, for any $v\notin \mathcal{A}_{[m]}$. Then, $\mathcal{A}^* \subseteq \mathcal{A}_{[m]}$.
\end{corollary}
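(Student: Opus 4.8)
The plan is to combine the monotonicity from Proposition~\ref{prop:choose-attacker} with a standard exchange argument on the outer selection problem \refeq{eqn:max-total-influence}. First I would observe that, once the inner problem \refeq{eqn:max-attacker-influence} has been solved for every user, each optimal individual influence $\delta_u^*$ is just a fixed scalar attached to $u$, so \refeq{eqn:max-total-influence} collapses to the purely combinatorial task of choosing at most $m$ users to maximize $\sum_{u\in\mathcal{A}}\delta_u^*$. For such a problem the optimum is attained by taking the users carrying the largest values of $\delta_u^*$ (and discarding any with $\delta_u^*<0$, since keeping them would only lower the sum).

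The key step is to transport the ordering. Proposition~\ref{prop:choose-attacker} states that $z^*(u_1)\ge z^*(u_2)$ implies $\delta_{u_1}^*\ge\delta_{u_2}^*$, i.e. the map $u\mapsto\delta_u^*$ is non-decreasing in the unperturbed expressed opinion $z^*(u)$. Hence the $m$ users with the largest $z^*$ values, namely $\mathcal{A}_{[m]}$, also carry $m$ of the largest values of $\delta_u^*$, which is exactly what links the ranking by opinion to the ranking that matters for \refeq{eqn:max-total-influence}.

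To turn this into the claimed containment I would argue by exchange. Suppose an optimal attacker set $\mathcal{A}^*$ contained some user $w\notin\mathcal{A}_{[m]}$. Since $|\mathcal{A}^*|\le m=|\mathcal{A}_{[m]}|$ and $w\in\mathcal{A}^*\setminus\mathcal{A}_{[m]}$, we have $|\mathcal{A}^*\cap\mathcal{A}_{[m]}|\le m-1<|\mathcal{A}_{[m]}|$, so there exists a user $u'\in\mathcal{A}_{[m]}\setminus\mathcal{A}^*$. Because $u'\in\mathcal{A}_{[m]}$ and $w\notin\mathcal{A}_{[m]}$, the defining property of $\mathcal{A}_{[m]}$ gives $z^*(u')\ge z^*([m])\ge z^*(w)$, whence Proposition~\ref{prop:choose-attacker} yields $\delta_{u'}^*\ge\delta_w^*$. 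Replacing $w$ by $u'$ therefore does not decrease $\sum_{u\in\mathcal{A}}\delta_u^*$, producing another optimal set with one more element inside $\mathcal{A}_{[m]}$; iterating this swap removes every outside user and yields $\mathcal{A}^*\subseteq\mathcal{A}_{[m]}$. (As a side remark, optimality already forces $\delta_w^*\ge 0$ for every selected $w$, since deleting a negative-influence attacker would strictly improve the objective; this is not needed for the swap but clarifies that selected attackers genuinely help the adversary.)

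The hard part is not any computation — no further analysis of $\tilde{\mb{B}}$ or $\mb{\Delta}_W$ is required, as Proposition~\ref{prop:choose-attacker} already absorbs all the structure — but rather the delicate handling of ties in $z^*$ at the boundary rank $m$. When several users share the value $z^*([m])$, Proposition~\ref{prop:choose-attacker} only delivers equalities $\delta_{u'}^*=\delta_w^*$, so the exchange produces an equally optimal set but does not single out a unique one. I would therefore read the statement as asserting the existence of an optimal $\mathcal{A}^*$ contained in $\mathcal{A}_{[m]}$ (equivalently, the containment holds under whatever fixed tie-breaking rule defines $\mathcal{A}_{[m]}$), and flag this explicitly; apart from this tie-breaking caveat the corollary is an immediate consequence of the monotonicity in Proposition~\ref{prop:choose-attacker}.
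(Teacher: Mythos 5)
Your proposal is correct and takes essentially the same route as the paper: both rest entirely on the monotonicity of Proposition~\ref{prop:choose-attacker} to transfer the ordering of $z^*(u)$ to the ordering of $\delta_u^*$, so that the outer problem \refeq{eqn:max-total-influence} is solved by taking top values of $\delta_u^*$, with negative individual influences explaining why the relation is $\subseteq$ rather than equality. Your exchange argument and the tie-breaking caveat at rank $m$ merely make rigorous what the paper's terser proof leaves implicit; the caveat is in fact warranted, since when several users share the value $z^*([m])$ the containment only holds for some optimal $\mathcal{A}^*$ (equivalently, under the fixed tie-breaking that defines $\mathcal{A}_{[m]}$).
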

\begin{proof}
  From \refprop{prop:choose-attacker}, if $z^*([1]) \ge z^*([2]) \ge \cdots \ge z^*([m]) \ge z^*(v)$, we have $\delta^*_{u_{[1]}} \ge \delta^*_{u_{[2]}} \ge \cdots \ge \delta^*_{u_{[m]}} \ge \delta^*_v$, for any $v \notin \mathcal{A}_{[m]}$. Note that from the analysis in the above section, a user $u^\prime \in \mathcal{A}_{[m]}$ can have negative individual influence $\delta_{u^\prime}^*$ if $\big(z^*(u^\prime) - c_2(v) < 0$ for all $v\in \mathcal{V}$. In this case, $u^\prime$ should not be included in $\mathcal{A}^*$. Therefore, $\mathcal{A}^* \subseteq \mathcal{A}_{[m]}$.
\end{proof}
With \refprop{prop:choose-attacker} and \refcor{cor:opt-attacker-set}, we can simplify the searching process by first deciding the candidates of attackers $\mathcal{A}_{[m]}$ with the sorted expressed opinion $z^*(u)$. The complexity of this step is $\mathcal{O}(N \log m)$.Then, we only need to search for the target user set of each candidate attacker, and decide their optimal individual influence with \refeq{eqn:max-attacker-influence}. The complexity of this step is $\mathcal{O}(m \times N \log k)$. The overall complexity is $\mathcal{O}(N \log m + m \times N \log k) = \mathcal{O}(N \times(\log m + m \log k))$, which is significantly smaller than $\mathcal{O}(N\cdot (N \log k + \log m))$, when $m$ and $k$ are not comparable to $N$.
The algorithm for optimal attacking strategy is summarized in \refalg{alg:linear-search}.

\begin{algorithm}[t]
  \SetAlgoLined
  \KwIn{$\mb{c}_1, \mb{c}_2, \mb{z}^*$ and constraints of $m$ and $k$}
  \KwOut{Optimal attacker set $\mathcal{A}^*$ and optimal target user set $\mathcal{T}^*_u$ for $u\in \mathcal{A}^*$}
      Sorted $z^*(u)$ in descending order an pick top $m$ of it as the candidate attackers $\mathcal{A}_{[m]}$\;
      $\mathcal{A}^* \leftarrow \emptyset$\;
      \For(Searching for optimal attackers.){$u \in \mathcal{A}_{[m]}$}{
          Calculate $\delta_{uv} = c_1(v)\cdot \big(z^*(u) - c_2(v)\big)$ for each $v$\;
          Sorted $\delta_{uv}$ in descending order and pick top $k$ of it as $\mathcal{T}_u$\;
          $\mathcal{T}_u^* \leftarrow \emptyset$, $\delta_u^* \leftarrow 0$\;
          \For(Searching for optimal target users.){$v \in \mathcal{T}_u$}{
              \eIf{$\delta_{uv} > 0$}{
                $\delta_u^* \leftarrow \delta_u^* + \delta_{uv}$, $\mathcal{T}_u^* \leftarrow \mathcal{T}_u^* \cup \{v\}$                  
              }{
                  \texttt{break}\;
              }
          }
          \If {$\delta_u^* \le 0$}{
              \texttt{break}\;
          }
          $\mathcal{A}^* \leftarrow \mathcal{A}^* \cup \{u\}$\;
      }
      \caption{Linear search algorithm for \refeq{eqn:max-perturb-term}}
  \label{alg:linear-search}
\end{algorithm}

% \begin{figure}[h!]
% \centering
% \includegraphics[width=0.6\textwidth]{lin-search}
% \caption{An illustration of linear search algorithm for solving \refeq{eqn:attack-optim-transform}. The entry $(u,v)$ of the matrix represent the value of $c_1(v)\cdot \big(z^*(u) - c_2(v)\big)$. The attacker side ($u$-axis) is sorted in descending order. By this sort, for each column, we have the value of top row is always greater than or equal to the rows above since $c_1(v)\cdot \big(z^*(u) - c_2(v)\big) \ge c_1(v)\cdot \big(z^*(u^\prime) - c_2(v)\big)$ if $z^*(u) \ge z^*(u^\prime)$. To maximize $\sum_{u \in \mathcal{A}} \sum_{v \in \mathcal{T}_u} c_1(v)\cdot \big(z^*(u) - c_2(v)\big)$, we first find the optimal target user $\mathcal{T}^*_u$ for user $u$ if he/she is selected as attacker. This corresponds to select at most $k$ ($k = 3$ in the above figure) largest elements in each row, and add them up to obtain $\delta_u^*$. Note that if the number of positive value is smaller than $k$ for a specific row, the searching procedure should stop for that row.
% Then, with the $\delta_u$ calculated for each row, we select the top $m$ of it as attackers. Note that the search for $m$ attackers should also stop if $\delta_u \le 0$.
% It is obvious that the top $m$ row corresponds to those who have the largest $z^*(u)$.}
% \label{fig:lin-search}
% \end{figure}
\section{Network Defense}
\label{sec:defense}
Based on the above discussion of the optimal attacking by the adversary, in this section, we consider how to minimize the total opinion in \refeq{eqn:rocpi} under such adversarial network perturbation. Here, we also relax the objective function with \refeq{eqn:approx-new-total} as in \refsec{sec:attack}, and formulate the following defense problem.
\begin{equation}
  \begin{aligned}
    \min_{\mb{s}} & \quad\max_{\mathcal{A}, \mathcal{T}_u} \quad\mb{1}^T \mb{B} \mb{A} \mb{s} + p\times \mb{1}^T \mb{B} (\mb{I} - \mb{A}) \mb{\Delta}_{W} \mb{B} \mb{A} \mb{s}\\
    s.t.          & \quad 
    \left\{\begin{array}{lc}
        \mb{1}^T (\mb{s}_0 - \mb{s}) \le \mu,\\
        \mb{0} \le \mb{s} \le \mb{s}_0,\\
        \vert \mathcal{A} \vert \le m,\\
        \vert \mathcal{T}_u \vert \le k,~ \forall u\in \mathcal{A}.
    \end{array}\right.
    \end{aligned}
  \label{eqn:rocpi-relax}
\end{equation}
We next show the convexity of the relaxed network defense problem in \refeq{eqn:rocpi-relax}.
\begin{theorem}
  \label{thm:convex-defense}
  The problem in \refeq{eqn:rocpi-relax} is convex.
\end{theorem}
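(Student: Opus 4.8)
The plan is to recognize \refeq{eqn:rocpi-relax} as the minimization of a convex function over a convex feasible set, which is precisely the definition of a convex program. I would split the argument into three pieces: convexity of the feasible region for the defender, linearity of the inner objective once the adversary's strategy is fixed, and convexity of the inner maximum as a function of $\mb{s}$.

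First I would dispose of the feasible set. The constraints $\mb{1}^T(\mb{s}_0 - \mb{s}) \le \mu$ and $\mb{0} \le \mb{s} \le \mb{s}_0$ are all affine inequalities in $\mb{s}$, so the feasible region is a finite intersection of half-spaces, hence a convex polytope. This step is routine and requires no special structure.

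The key observation is that for any \emph{fixed} adversary strategy---a fixed attacker set $\mathcal{A}$ together with fixed target-user sets $\{\mathcal{T}_u\}$---the perturbation matrix $\mb{\Delta}_{W}$ in \refeq{eqn:delta-w} is determined entirely by these discrete choices and by the fixed adjacency matrix $\mb{W}$; it does \emph{not} depend on $\mb{s}$. Since $\mb{B}$, $\mb{A}$, and $p$ are likewise independent of $\mb{s}$, the inner objective
$$g_{\mathcal{A},\mathcal{T}}(\mb{s}) = \mb{1}^T \mb{B} \mb{A} \mb{s} + p\times \mb{1}^T \mb{B}(\mb{I} - \mb{A})\mb{\Delta}_{W}\mb{B}\mb{A}\mb{s}$$
is a \emph{linear} function of $\mb{s}$. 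I would make this explicit by writing $g_{\mathcal{A},\mathcal{T}}(\mb{s}) = \mb{d}_{\mathcal{A},\mathcal{T}}^T \mb{s}$ with the constant coefficient vector $\mb{d}_{\mathcal{A},\mathcal{T}} = \big(\mb{1}^T \mb{B}\mb{A} + p\,\mb{1}^T \mb{B}(\mb{I}-\mb{A})\mb{\Delta}_{W}\mb{B}\mb{A}\big)^T$, which absorbs all the $\mb{s}$-independent quantities.

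Finally, because $\mathcal{A}$ and each $\mathcal{T}_u$ range over subsets of the finite user set subject to $\vert\mathcal{A}\vert \le m$ and $\vert\mathcal{T}_u\vert \le k$, the adversary selects from only finitely many strategies. The inner maximization is therefore a pointwise maximum over a finite family of linear functions $g_{\mathcal{A},\mathcal{T}}(\mb{s})$. Since each member is convex (being linear) and the pointwise supremum of convex functions is convex, the inner maximum is convex in $\mb{s}$; minimizing it over the convex polytope then yields a convex program. As for the main obstacle, there is no heavy computation: the crux is the conceptual step of isolating that the adversary's variables enter the objective \emph{only} through $\mb{\Delta}_{W}$, which decouples them from $\mb{s}$ and makes every branch linear. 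Once that decoupling is stated, convexity reduces to the elementary fact that a maximum of affine functions is convex, and the only mild care needed is to note that the cardinality bounds guarantee the index set of the maximum is finite.
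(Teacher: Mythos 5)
Your proposal is correct and follows essentially the same route as the paper's proof: fix the adversary's discrete strategy, observe that $\mb{\Delta}_{W}$ (and hence the inner objective) is then linear in $\mb{s}$, conclude that the inner maximum is a pointwise supremum of linear functions and therefore convex, and note that the defender's constraints are affine. Your extra remark that the strategy family is finite is fine but not needed, since the supremum of an arbitrary family of convex functions is already convex.
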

\begin{proof}
  Let $\mathfrak{A}$ be the set of all possible attacker sets. That is, for any $ |\mathcal{A}| \le m, ~\forall\mathcal{A} \in \mathfrak{A}$. Similarly, let $\mathfrak{T}_u$ be the set of all possible target user set of attacker $u$. That is, for any $|\mathcal{T}_u| \le k, ~\forall \mathcal{T}_u \in \mathfrak{T}_u$. Note that the variable of the inner maximization, i.e., $\mathcal{A}$ and $\mathcal{T}_u$ are only related to $\mb{\Delta}_{W}$. Thus, we also denote $\mb{\Delta}_{W}$ by $\mb{\Delta}_{W}(\mathcal{A}, \mathcal{T}_u)$ for clarity here.

  Note that when $\mathcal{A}$ and $\mathcal{T}_u$ are given, the function
  $$
  \quad\mb{1}^T \mb{B} \mb{A} \mb{s} + p\times \mb{1}^T \mb{B} (\mb{I} - \mb{A}) \mb{\Delta}_{W}(\mathcal{A}, \mathcal{T}_u) \mb{B} \mb{A} \mb{s}
  $$
  is a linear function with respect to the decision variable $\mb{s}$.
  Consequently, the inner maximization of \refeq{eqn:rocpi-relax} is the supremum of all possible linear function of $\mb{s}$, when $\mathcal{A}$ is taken from $\mathfrak{A}$ and $\mathcal{T}_u$ is taken from $\mathfrak{T}_u$. Thus, the objective of the minimization is convex with respect to $\mb{s}$.
  In terms of the constraints, both $\mb{1}^T (\mb{s}_0 - \mb{s}) \le \mu$ and $\mb{0} \le \mb{s} \le \mb{s}_0$ are convex set with respect to $\mb{s}$. Therefore, \refeq{eqn:rocpi-relax} is a convex problem.
\end{proof}

From \refthm{thm:convex-defense}, a direct method to solve \refeq{eqn:rocpi-relax} is to introduce the upper bound $u_b$, and transform the problem in \refeq{eqn:rocpi-relax} as
\begin{equation}
  \begin{aligned}
    \min_{\mb{s}, u_b} & \hspace{20pt}  u_b\\
    s.t.          & \hspace{5pt} 
    \left\{\begin{array}{lc}
        \mb{1}^T (\mb{s}_0 - \mb{s}) \le \mu,\\
        \mb{0} \le \mb{s} \le \mb{s}_0,\\
        \mb{1}^T \mb{B} \mb{A} \mb{s} + p\mb{1}^T \mb{B} (\mb{I} - \mb{A}) \mb{\Delta}_{W}(\mathcal{A}, \mathcal{T}_u) \mb{B} \mb{A} \mb{s} \le u_b,\\
        \forall \mathcal{A} \in \mathfrak{A},\text{ and } \forall \mathcal{T}_u\in \mathfrak{T}_u\\
    \end{array}\right.
    \end{aligned}
  \label{eqn:rocpi-trans1}
\end{equation}
The above transformed problem is a linear programming and can be solved with interior-point method \cite{boyd2004convex}. However, the challenge is that there are too many constraints, because we need to check every possible attacker set $\mathcal{A}$ and $\mathcal{T}_u$.
Therefore, we directly solve \refeq{eqn:rocpi-relax}.

\subsection{Project Subgradient Algorithm for Network Defense}
\begin{algorithm}[t]
  \SetAlgoLined
  \KwIn{$\mb{B}, \mb{W}, \mb{A}, \mb{s}_0$ and constraints of $m, k, \mu$}
  \KwParams{Initial step size $\eta_0$ and number of interation $T_{max}$}
  \KwOut{Optimal controlled innate opinions $\mb{s}^*$}
    $f^* \leftarrow +\infty$, $\mb{s}^* \leftarrow \mb{s}_0$, $\mb{c}_1 = (\mb{1}^T \mb{B} (\mb{I} - \mb{A}))^T$, $T \leftarrow 0$\;
    \While{$T < T_{max}$}{

      {\nonl\texttt{// Attacking step: choose $\mathcal{A}^*$ and $\mathcal{T}_u^*$}}

      $\mb{z}^* \leftarrow \mb{B} \mb{A} \mb{s}$, $\mb{c}_2 \leftarrow \mb{W} \mb{z}^*$\;

      $\mathcal{A}^*, \mathcal{T}_u^* \leftarrow$ \refalg{alg:linear-search}$(\mb{c}_1, \mb{c_2}, z^*, m, k)$\;

      Calculate $\mb{\Delta}_W$ with \refeq{eqn:delta-w} using $\mathcal{A}^*$ and $\mathcal{T}_u^*$\;
     
      $f \leftarrow \mb{1}^T \mb{B} \mb{A} \mb{s} + p\times \mb{1}^T \mb{B} (\mb{I} - \mb{A}) \mb{\Delta}_{W} \mb{B} \mb{A} \mb{s}$\;

      \If{$f^* < f$}{
        $f^* \leftarrow f$, $\mb{s}^* \leftarrow \mb{s}$\;
      }

      {\nonl\texttt{// Defense adjustment: update $\mb{s}$}}

      Calculate subgradient $\mb{g}$ with \refeq{eqn:sub-grad}\;

      $\eta=\eta_0 / \sqrt{k}$\;

      Update $\mb{s} \leftarrow \text{Proj}(\mb{s} - \eta \cdot \mb{g})$\;

      $T \leftarrow T + 1$\;
    }
    \caption{Projected subgradient method for \refeq{eqn:rocpi-relax}}
  \label{alg:subgrad-defense}
\end{algorithm}
The challenge of directly solving \refeq{eqn:rocpi-relax} is that the inner maximization function introduces non-differentiability. Note that there are also constraints on the variable $\mb{s}$. To address these challenges, we devise a projected subgradient algorithm \cite{boyd2003subgradient} to solve \refeq{eqn:rocpi-relax}. The algorithm is summarized in \refalg{alg:subgrad-defense}. In the next, we elaborate the details of the designed algorithm.

The algorithm starts from the uncontrolled innate opinions $\mb{s}_0$. The core idea is to update the controlled innate opinions $\mb{s}$ iteratively with the guidance of subgradient.
In each iteration $T$, We first calculate the subgradient of the objective function in \refeq{eqn:rocpi-relax}. Thanks to our analysis in \refsec{sec:attack}, we can first efficiently solve the inner maximization with \refalg{alg:linear-search}, when $\mb{s}$ is given. This corresponds to line 3 and line 4 in \refalg{alg:subgrad-defense}.

Then, according to \cite{boyd2003subgradient}, we calculate the subgradient (line 10 in \refalg{alg:subgrad-defense}) as
\begin{equation}
  \mb{g} =  (\mb{B} \mb{A})^T \mb{1} + p \times \big( \mb{B} (\mb{I} - \mb{A}) \mb{\Delta}_{W}\mb{B} \mb{A} \big)^T \mb{1}.
  \label{eqn:sub-grad}
\end{equation}
where $\mb{\Delta}_{W}$ is calculated with the optimal attacker set $\mathcal{A}^*$ and optimal target user sets $\mathcal{T}_u^*$ (line 5 in \refalg{alg:subgrad-defense}).
Here, we also record the optimal variable $\mb{s}^*$ and optimal objective $f^*$ during the iteration (line 6-9 in \refalg{alg:subgrad-defense}), since subgradient algorithm is not a descent method \cite{boyd2003subgradient}.

Finally, with the calculated subgradient $\mb{g}$ in \refeq{eqn:sub-grad}, we update the controlled innate opinions $\mb{s}$ as in the line 12 of \refalg{alg:subgrad-defense}.
We adopt the \textit{nonsummable diminishing rule} \cite{boyd2003subgradient} for the choice of step size $\eta$, which guarantees the convergence and the optimality of the subgradient method.
The discussion of convergence and optimality will be shown \refsec{sec:converge}.
In addition, since we have two constraints on $\mb{s}$ in problem \refeq{eqn:rocpi-relax}, we need to project $\mb{s} - \eta \cdot \mb{g}$ on the feasible set when updating. Here, we denote the projection function (algorithm) as $\text{Proj}(\cdot)$.
Note that, in \refeq{eqn:rocpi-relax}, the first constratint is one sinlge equality constraint on the total control budget, while the second one is a box constraint on the range of controlled innate opinion $\mb{s}$.
Thus, the feasible set of \refeq{eqn:rocpi-relax} is the intersection of a hyperplane and a box area. In this work, we implement the projection algorithm in \cite{maculan2003n} whose time complexity is $\mathcal{O}(n)$. To this end, the time complexity of one single iteration of \refalg{alg:subgrad-defense} is $\mathcal{O}(N \times(\log m + m \log k  + 1))$.

\textbf{Game-Theoretical Interpretation.} The above process iterates until the maximal interation $T_{max}$ is reached. We can also divide one single iteration in \refalg{alg:subgrad-defense} as two part. The first part includes line 3-5, which solves the optimal attacking strategy for the adversarial network perturbation, given the current controlled innate opinion $\mb{s}$. We define this part as the \textit{attacking step}. The second part includes line 10-12, where the defender adjusts the variable $\mb{s}$ according to current optimal attacking strategy $\mathcal{A}^*$ and $\mathcal{T}_u^*$. We define this part as the \textit{defense adjustment}. Consequently, the iterations in \refalg{alg:subgrad-defense} can be regarded as a game played between the adversary and the defender. The adversary and the defender update their actions in turn. The updating criterion for the adversary is \refalg{alg:linear-search}, while the updating policy for the defender is based on the subgradient in \refeq{eqn:sub-grad}. Since \refalg{alg:subgrad-defense} is guaranteed to converge to the optimal solution, the converged results can be regarded as the Nash equilibrium in the above game.

\subsection{Convergence of the Proposed Algorithm}
\label{sec:converge}
Given the network defense algorithm in \refalg{alg:subgrad-defense}, we next analyze the convergence of it, and have the following theorem
\begin{theorem}
  \label{thm:converge}
  Let $f^*$ be the optimal value of problem \refeq{eqn:rocpi-relax}, and $f^*_{alg}$ be the optimal value given by \refalg{alg:subgrad-defense} with $T_{max}$ iterations in total. We have
  \begin{equation}
    \label{eqn:k-max-error}
    f^*_{alg} - f^* \le \frac{\mu^2 + \xi_2 N\cdot \left(1 + 2mkp^2\cdot \left(\frac{1}{\alpha_{min}} - 1\right)^2\right)}{\xi_1}
  \end{equation}
  where $\xi_1 = \sum_{i=1}^{T_{max}} 1 / \sqrt{i}$, $\xi_2 = \sum_{i=1}^{T_{max}} 1/ i$, and $\alpha_{min} = \min_i {\alpha_i}$ is the smallest stubbornness value of all users.
\end{theorem}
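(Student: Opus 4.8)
The plan is to treat \refeq{eqn:rocpi-relax} as the constrained convex minimization established in \refthm{thm:convex-defense} and to specialize the textbook projected-subgradient convergence estimate of \cite{boyd2003subgradient} to the present data. Writing $\mb{s}_i$ for the iterate at iteration $i$ and $\mb{s}^*$ for a minimizer, I would start from the basic one-step recursion: since the Euclidean projection onto the (convex) feasible set is non-expansive and $\mb{g}_i$ is a subgradient of the convex objective, the subgradient inequality $\mb{g}_i^T(\mb{s}_i-\mb{s}^*)\ge f(\mb{s}_i)-f^*$ gives
$$\|\mb{s}_{i+1} - \mb{s}^*\|^2 \le \|\mb{s}_i - \mb{s}^*\|^2 - 2\eta_i\big(f(\mb{s}_i) - f^*\big) + \eta_i^2\|\mb{g}_i\|^2.$$
Summing over $i=1,\dots,T_{max}$, telescoping, dropping the nonnegative terminal term, and using that the minimum is dominated by the $\eta_i$-weighted average yields the standard bound
$$f^*_{alg} - f^* \le \frac{\|\mb{s}_1 - \mb{s}^*\|^2 + \sum_{i=1}^{T_{max}}\eta_i^2\|\mb{g}_i\|^2}{2\sum_{i=1}^{T_{max}}\eta_i},$$
where $f^*_{alg}=\min_i f(\mb{s}_i)$ is the best recorded objective. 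With the nonsummable diminishing step $\eta_i = 1/\sqrt{i}$ one gets $\sum_i\eta_i=\xi_1$ and $\sum_i\eta_i^2=\xi_2$ immediately, so the denominator is $\xi_1$ (up to the harmless factor $2$), and everything reduces to bounding $\|\mb{s}_1-\mb{s}^*\|^2$ and $\max_i\|\mb{g}_i\|^2$.

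The initial-distance term is the easy one. The algorithm starts at $\mb{s}_1=\mb{s}_0$, and any feasible $\mb{s}^*$ obeys $\mb{0}\le\mb{s}^*\le\mb{s}_0$ and $\mb{1}^T(\mb{s}_0-\mb{s}^*)\le\mu$. Setting $\mb{d}=\mb{s}_0-\mb{s}^*\ge\mb{0}$, all coordinates are nonnegative, so $\|\mb{d}\|^2=\sum_i d_i^2 \le \big(\sum_i d_i\big)^2 = (\mb{1}^T\mb{d})^2 \le \mu^2$, which supplies the $\mu^2$ contribution to the numerator.

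The uniform subgradient bound is the crux and the step I expect to be hardest. Using \refeq{eqn:sub-grad} I would split $\mb{g}=\mb{g}^{(1)}+\mb{g}^{(2)}$ with $\mb{g}^{(1)}=(\mb{B}\mb{A})^T\mb{1}$ and $\mb{g}^{(2)}=p\,(\mb{B}(\mb{I}-\mb{A})\mb{\Delta}_{W}\mb{B}\mb{A})^T\mb{1}$. For the unperturbed part, the key structural identity is $\mb{B}\mb{A}\mb{1}=\mb{1}$, which follows because $\mb{W}\mb{1}=\mb{1}$ forces $(\mb{I}-(\mb{I}-\mb{A})\mb{W})\mb{1}=\mb{A}\mb{1}$; hence $\mb{g}^{(1)}$ is nonnegative with $\mb{1}^T\mb{g}^{(1)}=N$, and this is the origin of the dimension factor $N$ in the leading term. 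For the perturbation part I would exploit the sparsity of $\mb{\Delta}_{W}$: under the constraints there are at most $m$ attackers each acting on at most $k$ targets, i.e.\ at most $mk$ perturbation edges. Rewriting $\mb{g}^{(2)} = p\,\mb{A}\mb{B}^T\mb{\Delta}_{W}^T\mb{c}_1$ with $\mb{c}_1=(\mb{I}-\mb{A})\mb{B}^T\mb{1}$ exposes the factor $(\mb{I}-\mb{A})$, whose entries are the $(1-\alpha_i)$, and combining this with the $1/\alpha$-type growth of $\mb{B}$ is what produces the stubbornness factor $\big(\tfrac{1}{\alpha_{min}}-1\big)^2$, while the explicit $p$ gives $p^2$ and the edge count gives the $2mk$. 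Assembling $\|\mb{g}\|^2\le N\big(1+2mkp^2(\tfrac{1}{\alpha_{min}}-1)^2\big)$ together with $\|\mb{s}_1-\mb{s}^*\|^2\le\mu^2$ and $\sum_i\eta_i=\xi_1$, $\sum_i\eta_i^2=\xi_2$ into the estimate above yields \refeq{eqn:k-max-error}. The delicate part is precisely this subgradient bound: one must track Euclidean norms through the dense matrices $\mb{B}$ and $\mb{B}^T$, argue that the $mk$ rank-one attack terms do not accumulate worse than linearly, and pin down the exact dependence on $\alpha_{min}$; the projection, telescoping, and step-size bookkeeping are otherwise routine.
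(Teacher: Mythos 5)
Your skeleton coincides with the paper's: the paper likewise starts from the textbook projected-subgradient estimate (its \reflma{lma:converge-subgradient}, cited from \cite{boyd2003subgradient}), bounds the initial distance by $\mu^2$ with exactly your argument (\reflma{lma:up-s}: $d_i \ge 0$ and $\mb{1}^T\mb{d} \le \mu$ give $\sum_i d_i^2 \le \bigl(\sum_i d_i\bigr)^2 \le \mu^2$), and reduces everything to a uniform bound $G^2$ on $\Vert \mb{g} \Vert_2^2$. Those parts of your proposal, and the step-size bookkeeping, are fine.

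The genuine gap is the subgradient bound, which you correctly identify as the crux but never establish, and the one concrete mechanism you offer for it fails. From $\mb{W}\mb{1}=\mb{1}$ you correctly deduce $\mb{B}\mb{A}\mb{1}=\mb{1}$, hence $\mb{g}^{(1)}=(\mb{B}\mb{A})^T\mb{1}\ge\mb{0}$ and $\mb{1}^T\mb{g}^{(1)}=N$; but for a nonnegative vector these facts only yield $\Vert\mb{g}^{(1)}\Vert_2^2\le N^2$, not $\le N$, since the mass can concentrate on one coordinate: the $j$-th entry of $\mb{g}^{(1)}$ is $\alpha_j$ times the $j$-th column sum of $\mb{B}$, which is of order $N$ when $j$ is a hub that many users follow. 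So row-stochasticity alone cannot produce the leading factor $N$, and your claimed bound $\Vert\mb{g}\Vert_2^2 \le N\bigl(1+2mkp^2(\tfrac{1}{\alpha_{min}}-1)^2\bigr)$ does not follow from what you wrote. The paper's \reflma{lma:up-g} proceeds differently: it uses submultiplicativity to get $\Vert(\mb{B}\mb{A})^T\mb{1}\Vert_2^2\le N\Vert\mb{B}\Vert_2^2$, keeping the (possibly large) $\Vert\mb{B}\Vert_2^2$ explicit, and bounds the perturbation part by $\Vert\mb{g}^{(2)}\Vert_2^2\le p^2 N\Vert\mb{B}\Vert_2^4\Vert\mb{\Delta}_W\Vert_2^2$ together with $\Vert\mb{\Delta}_W\Vert_2^2\le mk\bigl(1+\Vert\mb{W}\Vert_2^2\bigr)$ --- the linear accumulation over the $mk$ rank-one attack terms that you gesture at but never carry out. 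You should also be aware that the paper's own last step, replacing $N\Vert\mb{B}\Vert_2^2\bigl(1+p^2mk\Vert\mb{B}\Vert_2^2(1+\Vert\mb{W}\Vert_2^2)\bigr)$ by $N\bigl(1+2mkp^2(\tfrac{1}{\alpha_{min}}-1)^2\bigr)$, is asserted without justification (it implicitly needs $\Vert\mb{W}\Vert_2\le 1$ and a spectral bound on $\mb{B}$ in terms of $1/\alpha_{min}$, and since $\Vert\mb{B}\Vert_2\ge 1$ the leading factor cannot simply be replaced by $N$); so the step you deferred is precisely where the paper is weakest, but a complete proof would still have to supply it.
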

\begin{proof}
  see \refapdx{apdx:thm-converge}
\end{proof}

From \refthm{thm:converge}, as $T_{max} \rightarrow \infty$, the right hand side of \refeq{eqn:k-max-error} $\rightarrow 0$, and thus, $f^*_{alg} \rightarrow f^*$. This indicates that when the number of iteration $T_{max}$ is sufficiently large, \refalg{alg:subgrad-defense} converge the optimal solution of problem \refeq{eqn:rocpi-relax}.

\refthm{thm:converge} shows the convergence speed of \refalg{alg:subgrad-defense}. That is, in practice, we can use \refthm{thm:converge} to decide the number of iterations $T_{max}$ for a desired optimality $f^*_{alg} - f^*$.
From \refeq{eqn:k-max-error}, we can see that when the network size is large (i.e., a larger $N$), \refalg{alg:subgrad-defense} needs more iterations to converge to a desired optimality. We can also see that when there are more attackers (i.e., a larger $m$), and when the attackers can influence more target users (i.e., a larger $k$), the convergence of \refalg{alg:subgrad-defense} is also slower. In addition, the convergence of \refeq{alg:subgrad-defense} is more sensitive to the strength of perturbation coefficient $p$ due to the square term $p^2$ in the right hand side of \refeq{eqn:k-max-error}. When the perturbation strength is small (i.e., a smaller $p$), the influence of the adversarial network perturbation is limited, and thus, the convergence speed of \refalg{alg:subgrad-defense} is faster. We can also see that user's stubbonrness value plays a critical role in the convergence speed of \refalg{alg:subgrad-defense}. When users have a larger stubbornness value (i.e., a larger $\alpha_{min}$), they are less influenced by others, and the influence of network perturbation is also smaller. Therefore, \refalg{alg:subgrad-defense} can have a faster convergence speed.

\section{Experiments}
\label{sec:exp}
\begin{table}[!t]
  \centering
  \caption{Network Statistics}
  \begin{tabular}{cccc}
    \toprule
    Network & $Nodes: |\mathcal{V}|$ & $Edges |\mathcal{E}|$ & Avg. Degree \\
    \midrule
    Reddit \cite{de2014learning}          & 553   & 8969  & 32.4\\
    Collaboration \cite{tang2009social}   & 679   & 1687  & 4.97\\
    Facebook \cite{leskovec2012learning}  & 4039  & 88234 & 43.6\\
    \bottomrule
  \end{tabular}
  \label{tab:net-stat}
\end{table}
In this section, we run simulation on real social networks to validate our analysis and the proposed control algorithm. We use the network of online discussion forum Reddit (Reddit network for short) in \cite{de2014learning}, the collaboration network of ``Data Mining'' (Collaboration network for short) in \cite{tang2009social}, and the online social network of Facebook (Facebook network for short) in \cite{leskovec2012learning}. The basic information of the above networks are listed in \reftab{tab:net-stat}.
For the Collaboration network and the Facebook network, we randomly generated the users' initial internal opinions $\mb{s}_0$ in range $(0.6, 1)$. For the Reddit network, we use the initial innate opinions that are provided in the original dataset \cite{de2014learning}. For all networks, we randomly generate users' stubbonrness value $\alpha_i$ in range $(0, 1)$. We also randomly generate the influence weights of the adjacency matrix $\mb{W}$, and unify each row of $\mb{W}$ so that it is a row stochastic matrix.
To avoid influence of randomness, we repeat each experiment for 10 times, and report the mean results in the following.

\begin{figure}[t]
  \centering
  \hspace{-8pt}
  \subfigure[Reddit, $k=100$]{\includegraphics[width=0.24\textwidth]{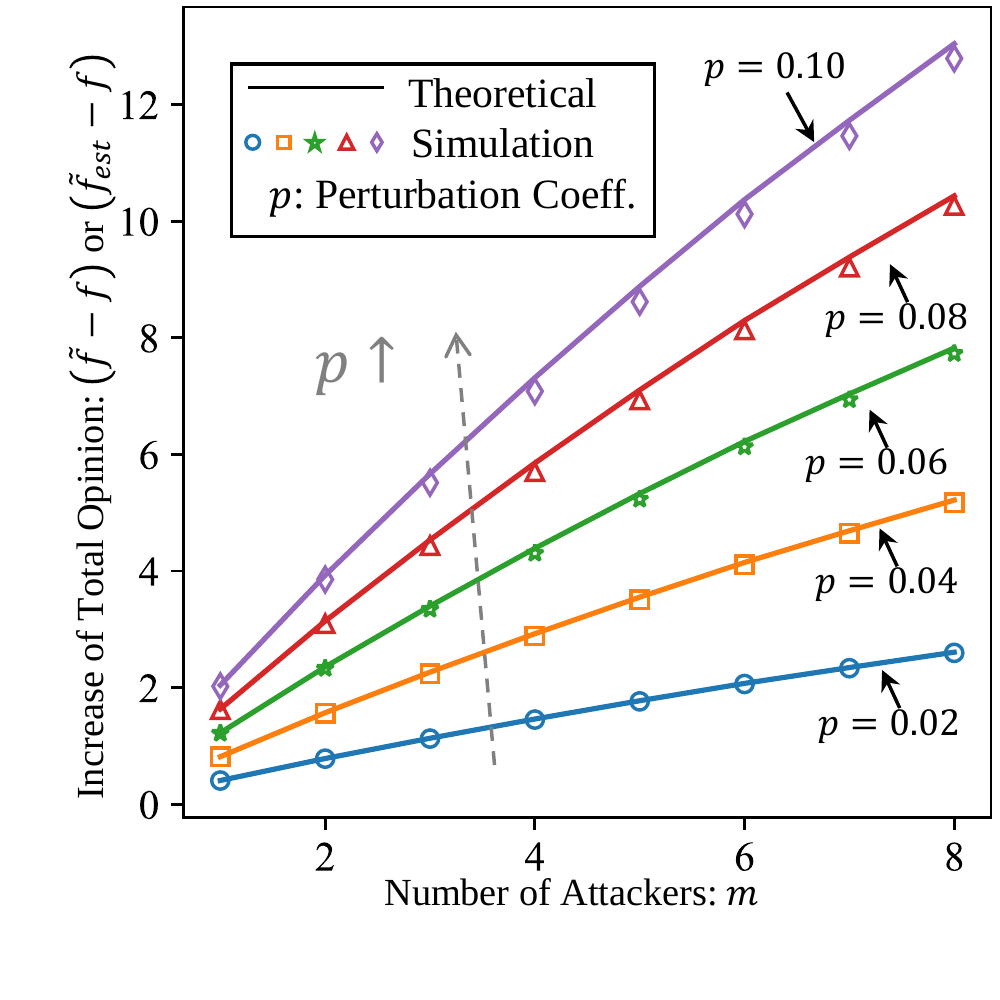}}\hspace{-3pt}
  \subfigure[Reddit, $m = 8$]{\includegraphics[width=0.24\textwidth]{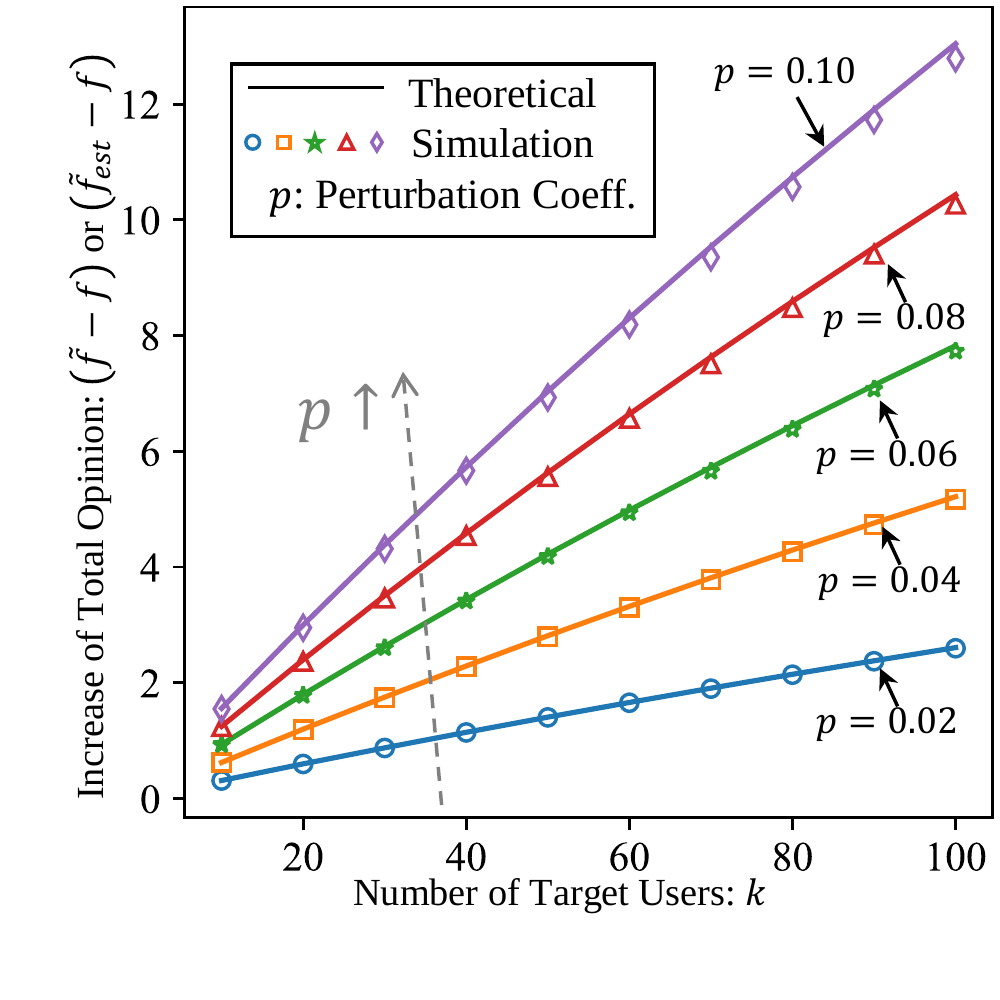}}\\
  \hspace{-8pt}
  \subfigure[Collaboration, $k=100$]{\includegraphics[width=0.24\textwidth]{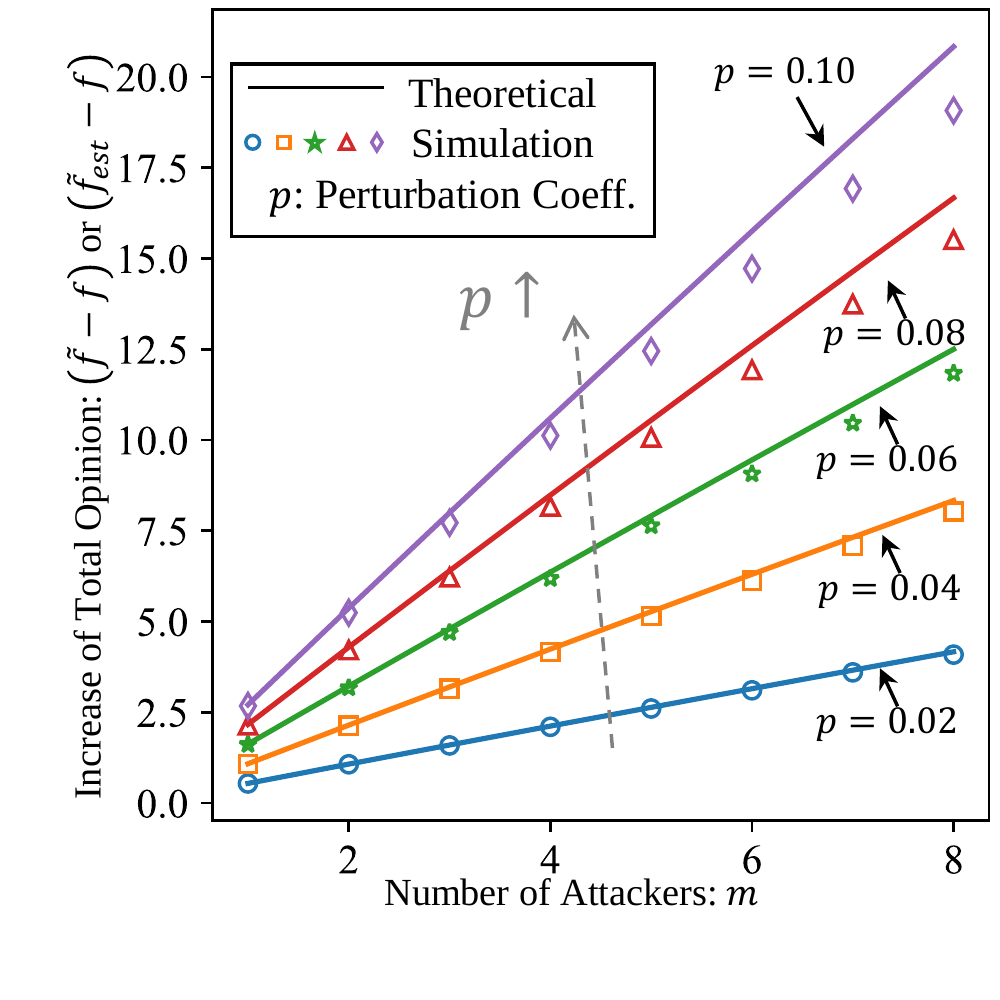}}\hspace{-3pt}
  \subfigure[Collaboration, $m = 8$]{\includegraphics[width=0.24\textwidth]{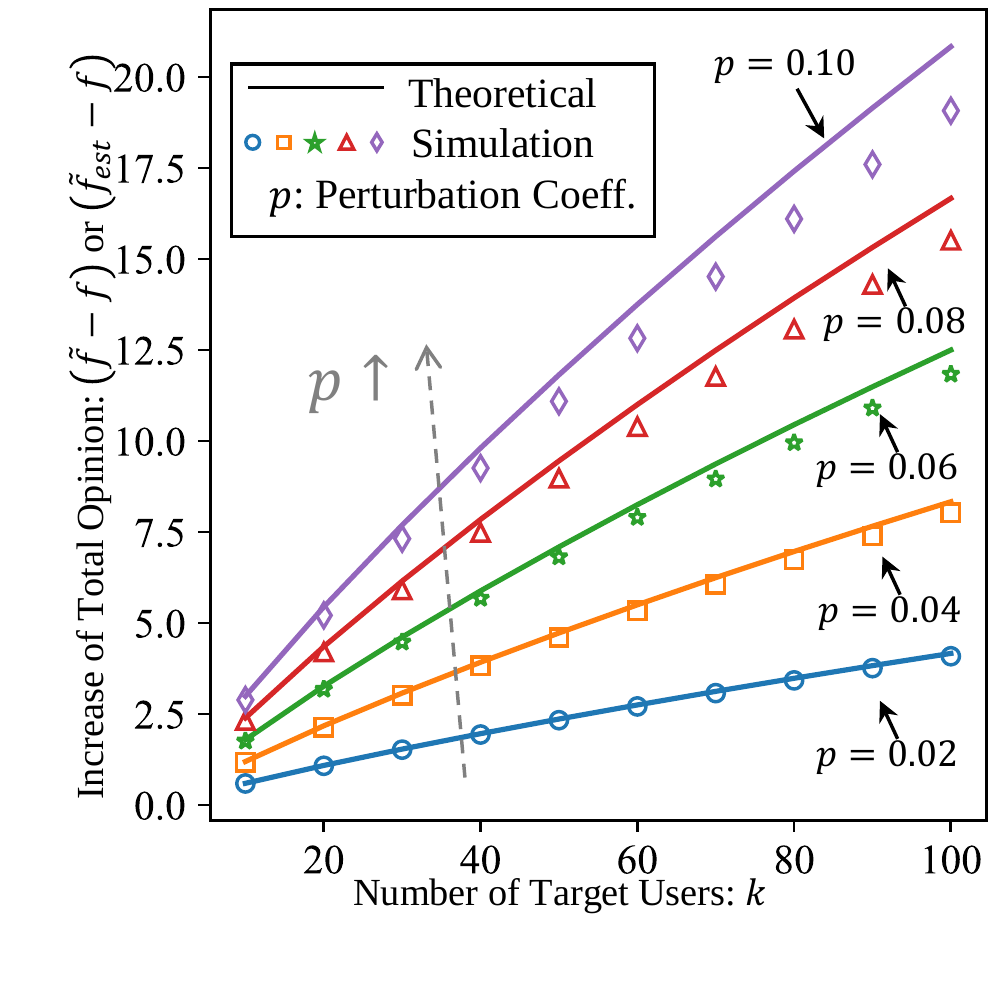}}\\
  \hspace{-8pt}
  \subfigure[Facebook, $k=100$]{\includegraphics[width=0.24\textwidth]{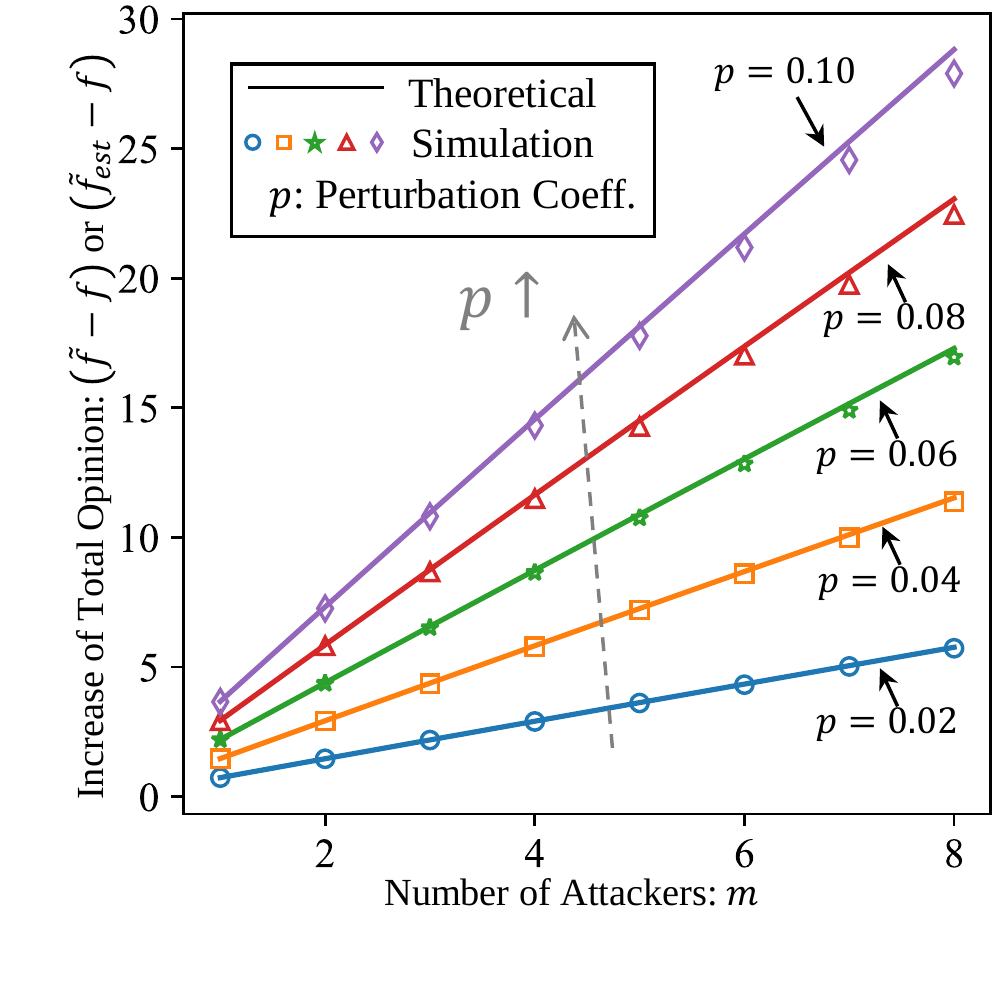}}\hspace{-3pt}
  \subfigure[Facebook, $m = 8$]{\includegraphics[width=0.24\textwidth]{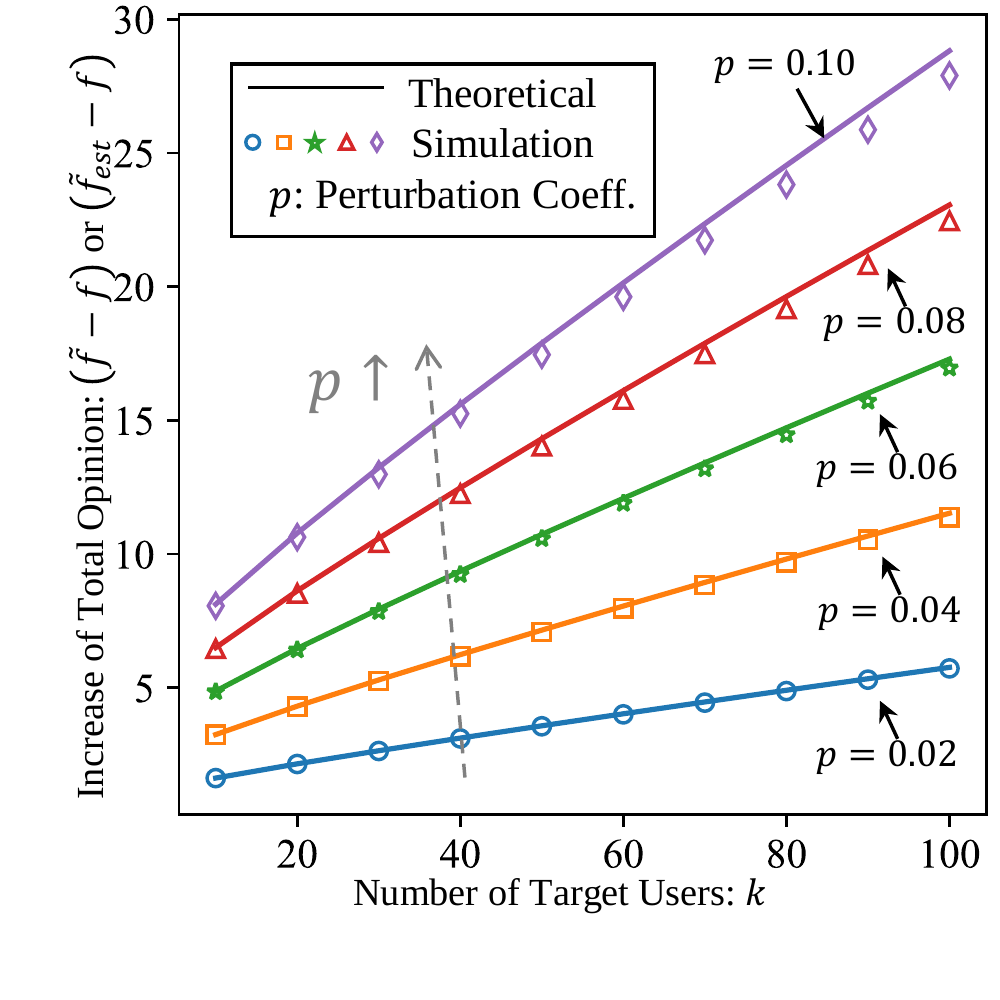}}
  \caption{Increase of total opinion on different social networks. (Left) Number of target users is $k=100$. (Right) Number of attackers is $m = 8$.}
  \label{fig:increase}
\end{figure}
\subsection{Validation of the Approximation in \refprop{prop:approx-new-total}}
As the main results of this paper are based on the approximation in \refprop{prop:approx-new-total}, we first verify the correctness of it.
We first calculate the total opinion $f$ with the initial internal opinions $\mb{s}_0$, users' stubbornness $\alpha_i$, and the adjacency matrix $\mb{W}$ according to \refeq{eqn:fj-equilibrium}. Then, we generate the optimal attackers $\mathcal{A}^*$ and the optimal target users $\mathcal{T}^*_u$ as in \refsec{sec:opt-attack}, and use them to perturb the networks as described in \refsec{sec:net-perturb} to obtain the perturbed network strucutre $\tilde{\mb{W}}$.
Next, with users' initial internal opinions $\mb{s}_0$, stubbornness $\alpha_i$, and the network structure $\tilde{\mb{W}}$, we simulate the FJ opinion dynamics model, and obtain the total opinion with adversarial network perturbtion $\tilde{f}$. We also calculate the approximate total opinion $\tilde{f}_{est}$ as in \refprop{prop:approx-new-total}. We show the increase of the total opinion $(\tilde{f} - f)$ and the estimated one, that is, $(\tilde{f}_{est} - f)$ on three networks in \reffig{fig:increase}.

From \reffig{fig:increase}, we can see that our theoretical approximated results match well with the simulation results on four networks with different perturbation coefficient $p$, the number of attackers $m$, and the number target users $k$. We can see that when there are more attackers and when the attackers can influence more target users, the increase of total opinion is larger. In addition, we can also see that when the perturbation coefficient $p$ is larger, the increase of total opinion is also larger. This observation validates the influence of the perturbation coefficient $p$ on the change of total opinion.

\begin{figure*}[t]
  \centering
  \hspace{-10pt}
  \subfigure[Reddit]{\includegraphics[width=0.335\textwidth]{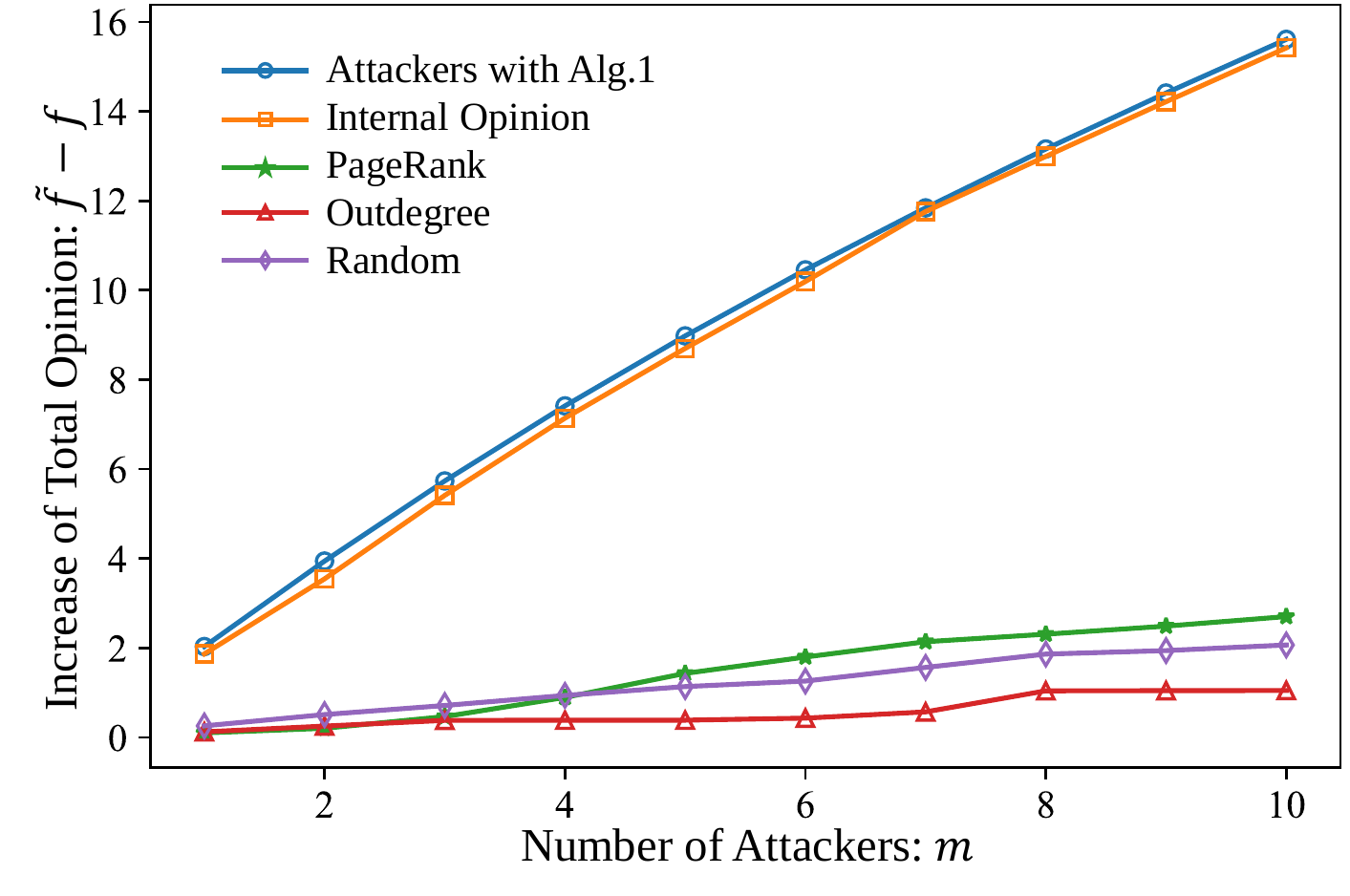}}\hspace{-3pt}
  \subfigure[Collaboration]{\includegraphics[width=0.335\textwidth]{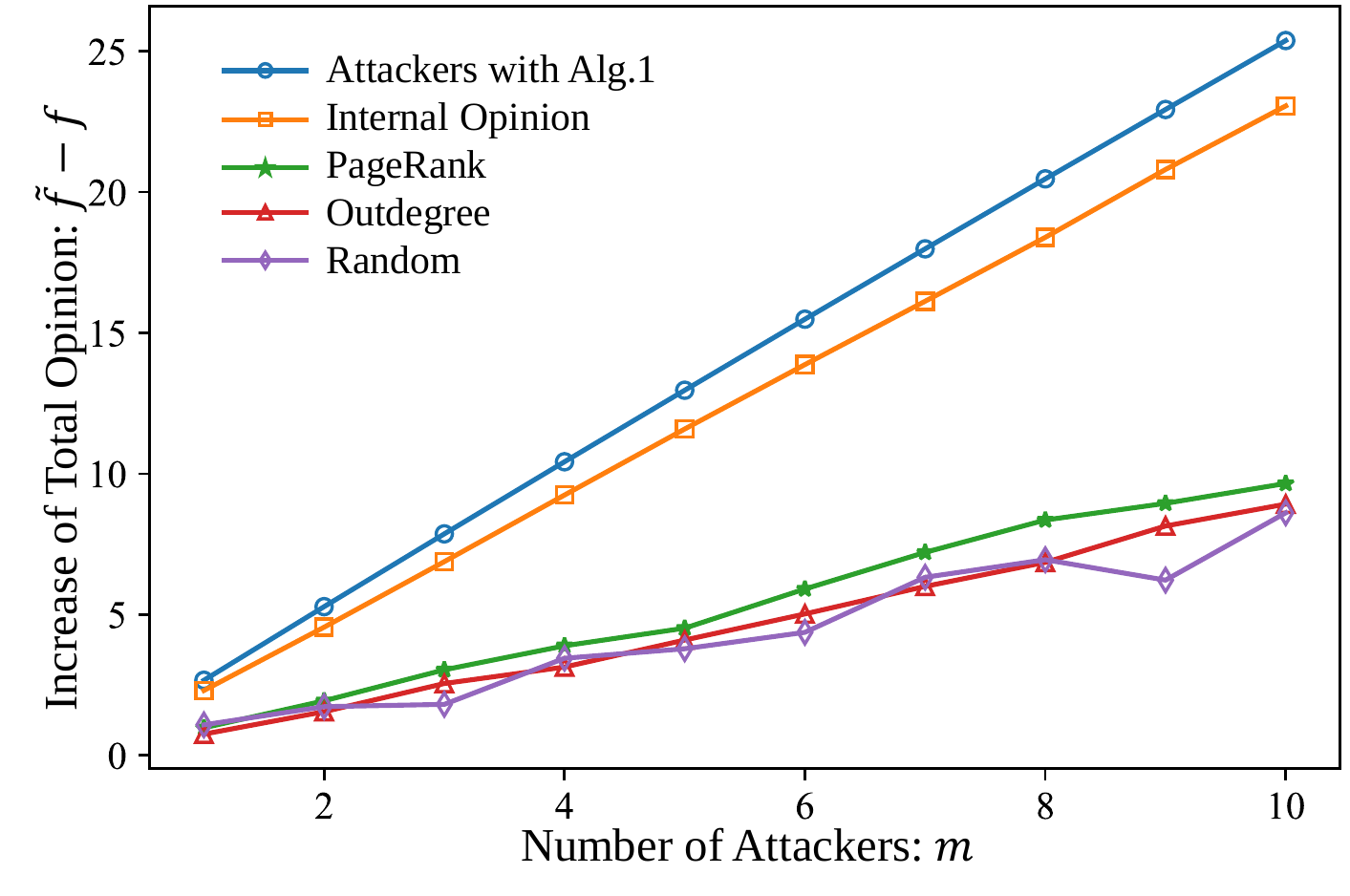}}\hspace{-3pt}
  \subfigure[Facebook]{\includegraphics[width=0.335\textwidth]{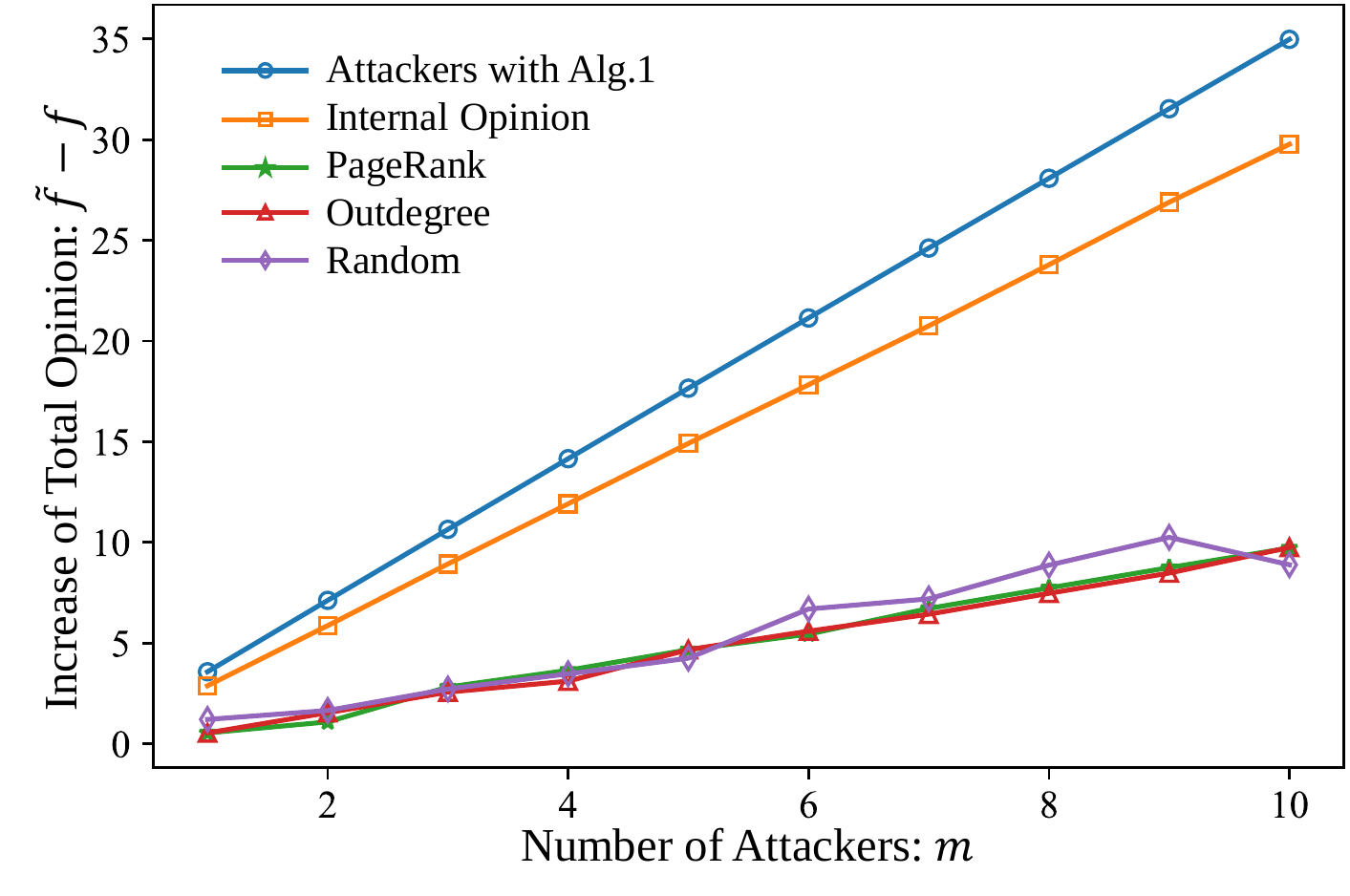}}
  \caption{Increase of total opinion $(\tilde{f} - f)$ when using different criterion to choose attackers.}
  \label{fig:attacker}
\end{figure*}
\begin{figure*}[t]
  \centering
  \hspace{-10pt}
  \subfigure[Reddit]{\includegraphics[width=0.335\textwidth]{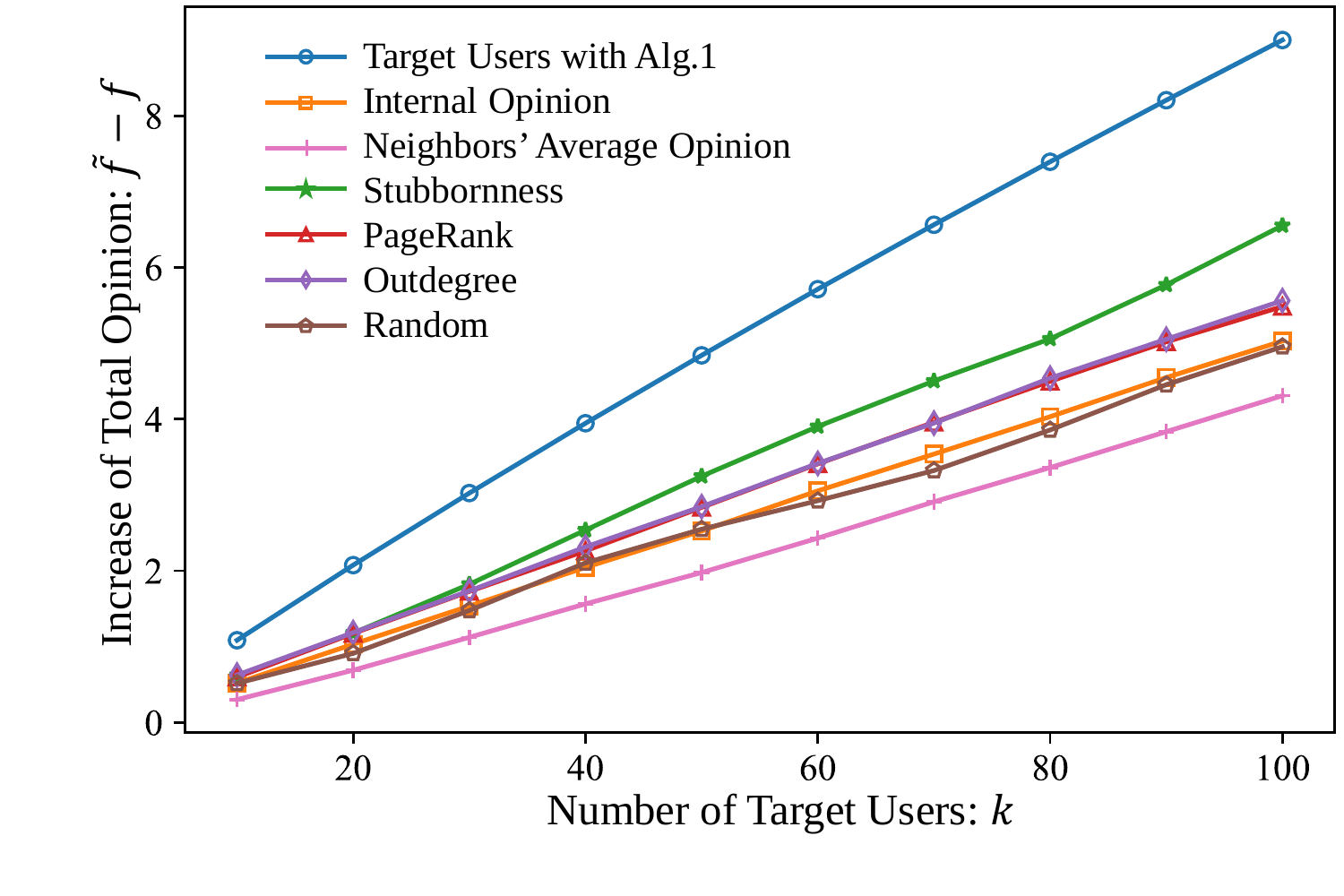}}\hspace{-3pt}
  \subfigure[Collaboration]{\includegraphics[width=0.335\textwidth]{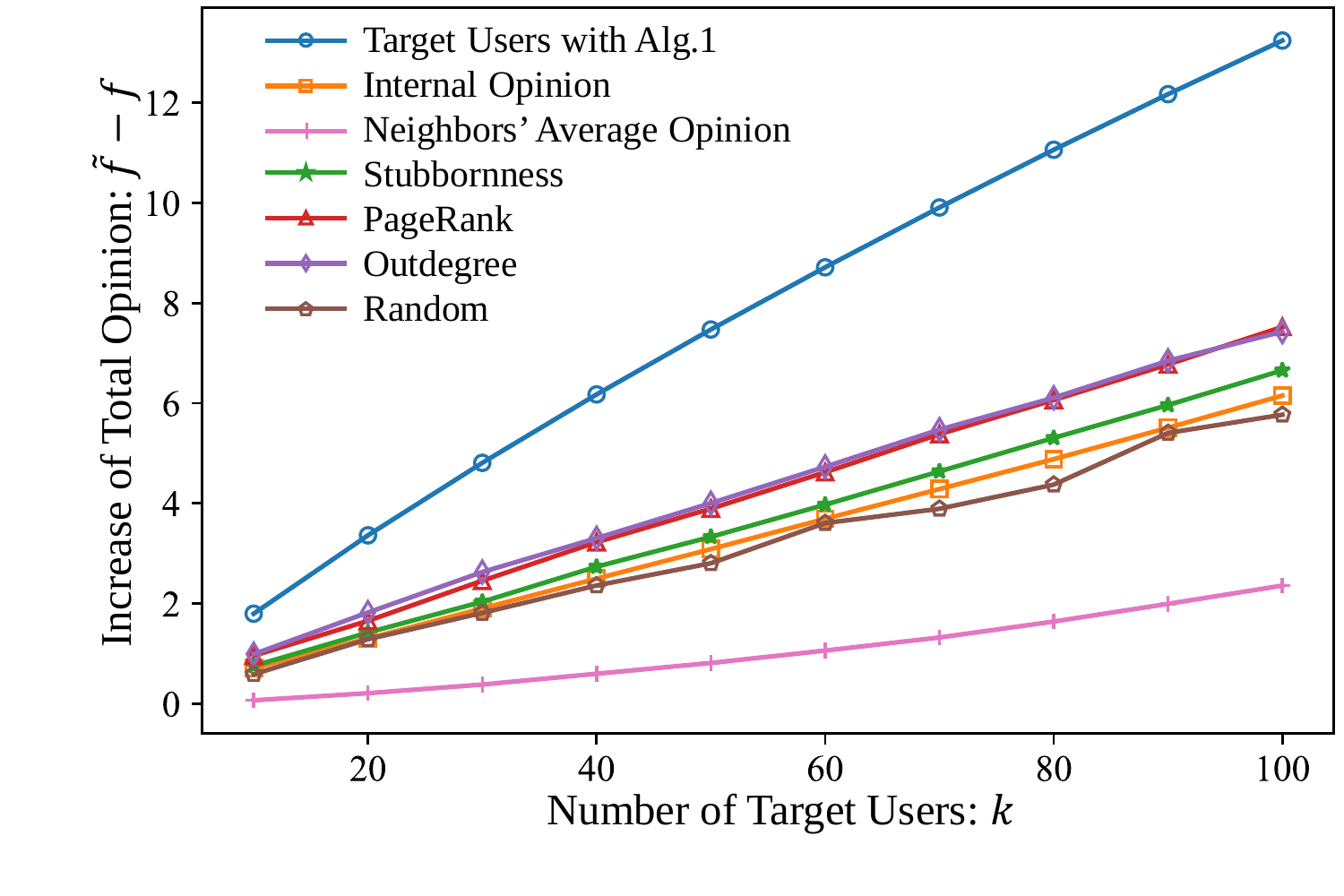}}\hspace{-3pt}
  \subfigure[Facebook]{\includegraphics[width=0.335\textwidth]{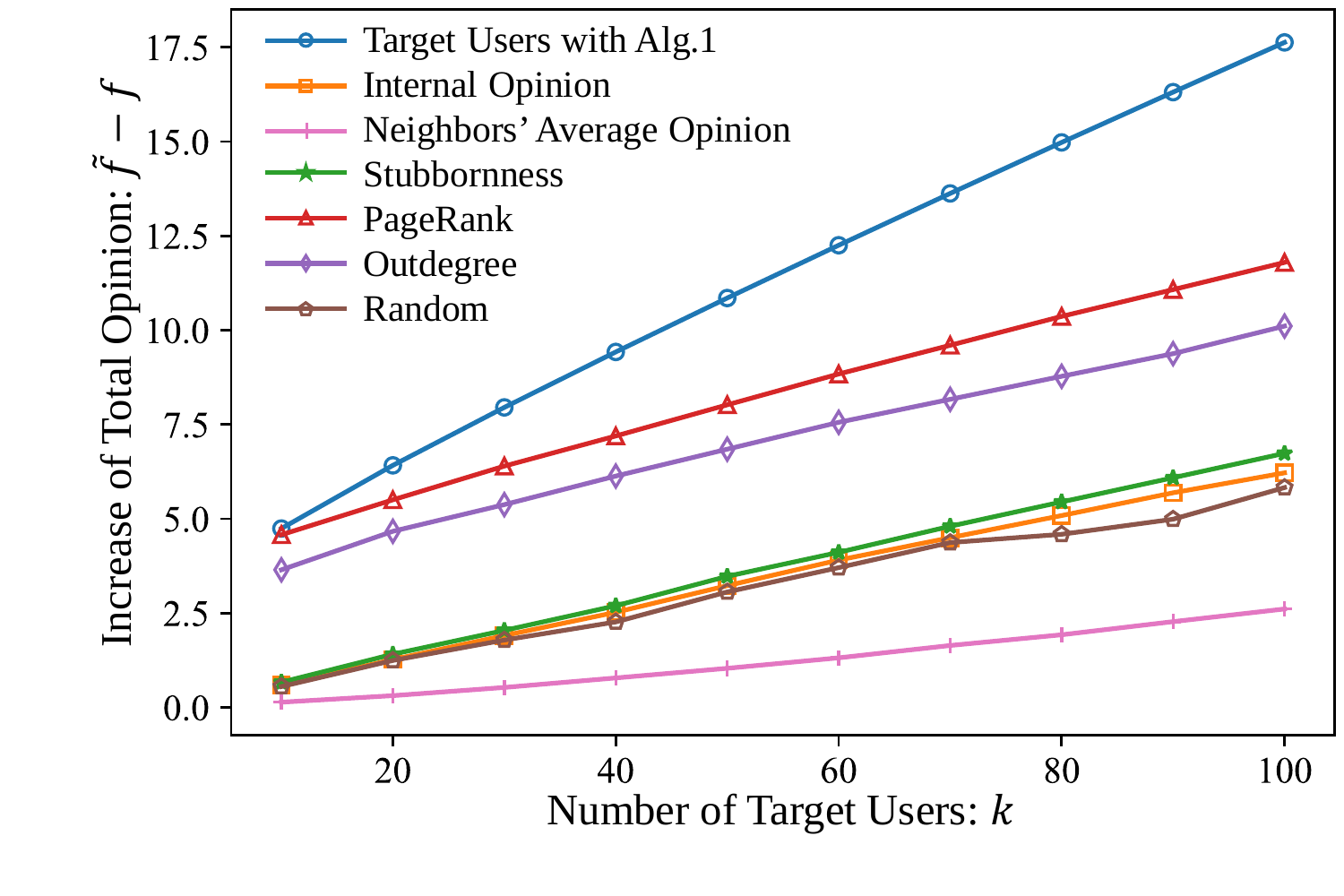}}
  \caption{Increase of total opinion $(\tilde{f} - f)$ when using different criterion to choose target users.}
  \label{fig:target}
\end{figure*}
\subsection{Validation of Optimal Choice of Attackers and Target Users}
In this section, we validate the performance of the selection criterion of attackers and target users in \refsec{sec:attack}.

\subsubsection{Choose of Attackers} We first validate the selection of optimal attackers in \refalg{alg:linear-search}. We choose the following heuristic criterion as baseline.
\begin{itemize}
  \item \textbf{Internal Opinion} which selects users with the largest internal opinion $s_i$ as the attacker;
  \item \textbf{PageRank} which selects users with the largest PageRank value as the attacker;
  \item \textbf{Outdegree} which selects users with the largest outdegree and can influence more users as the attacker; and
  \item \textbf{Random} which randomly selects users as the attackers.
\end{itemize}
For each selected attacker $u$, we use the proposed selection rule of target user in \refsec{sec:opt-attack} to decide his/her optimal target user set $\mathcal{T}^*_u$.
The perturbation coefficient is set to $p = 0.1$, and the number of target user for each attacker is set to $k=100$. We change the number of attacker $m$, and show the increase of total opinion $\tilde{f} - f$ on three networks in \reffig{fig:attacker}. When we select more attackers (i.e., a larger $m$), the increase of total opinion $(\tilde{f} - f)$ becomes larger. This shows that more attackers can cause a large perturbation in terms of the total opinion.

We also can see that the proposed selection criterion of the attackers outperforms other heuristics criterion by a large extent. We empirically find that choosing users with large internal opinion performs better than other two heuristics that only use network structure information and randomly selection. This indicates that users with larger internal opinions can cause a larger perturbation on the total opinion.

\subsubsection{Choose of Target Users} Next, we validate the selection of optimal target users in \refalg{alg:linear-search}. We also adopt the four heuristic criterions in the above as the baseline methods to select the target users for each attacker. In addition, according to our analysis in \refsec{sec:attack}, we add the following two heuristic criterions:
\begin{itemize}
  \item \textbf{Stubbornness} which selects users with the \textit{smallest} stubbornness value as target user;
  \item \textbf{Neighbors' Average Opinion} which selects users with the largest neighbors' average opinion $(\mb{W} \mb{z}^*)_i$ as target user.
\end{itemize}
We select $m=5$ attackers according to the criterion as in \refsec{sec:attack}, and set the strength of perturbation coefficient $p=0.1$. The increase of total opinion $(\tilde{f} - f)$ with different number of target user $k$ is shown in \reffig{fig:target}

From \reffig{fig:target}, we can see that the selection criterion of target user in \refalg{alg:linear-search} performs better than all heuristic baseline methods. When the attackers can influence more target users (i.e., a larger $k$), the their influence on the increase of total opinion is larger. In addition, we can also see that, the stubbornness criterion performs better than other heuristics methods on Reddit network. However, on Citation network and Facebook network, heuristic criterions related to network structure (PageRank critertion and Outdegree criterion) performs better than other baselines. This indicates that the network structure plays a more critical role on the perturbation of these two networks.

% \subsection{Validation of Optimal Choice of Target users}
% In this section, we validate the performance of 

\begin{figure*}[t]
  \centering
  \hspace{-10pt}
  \subfigure[Reddit]{\includegraphics[width=0.335\textwidth]{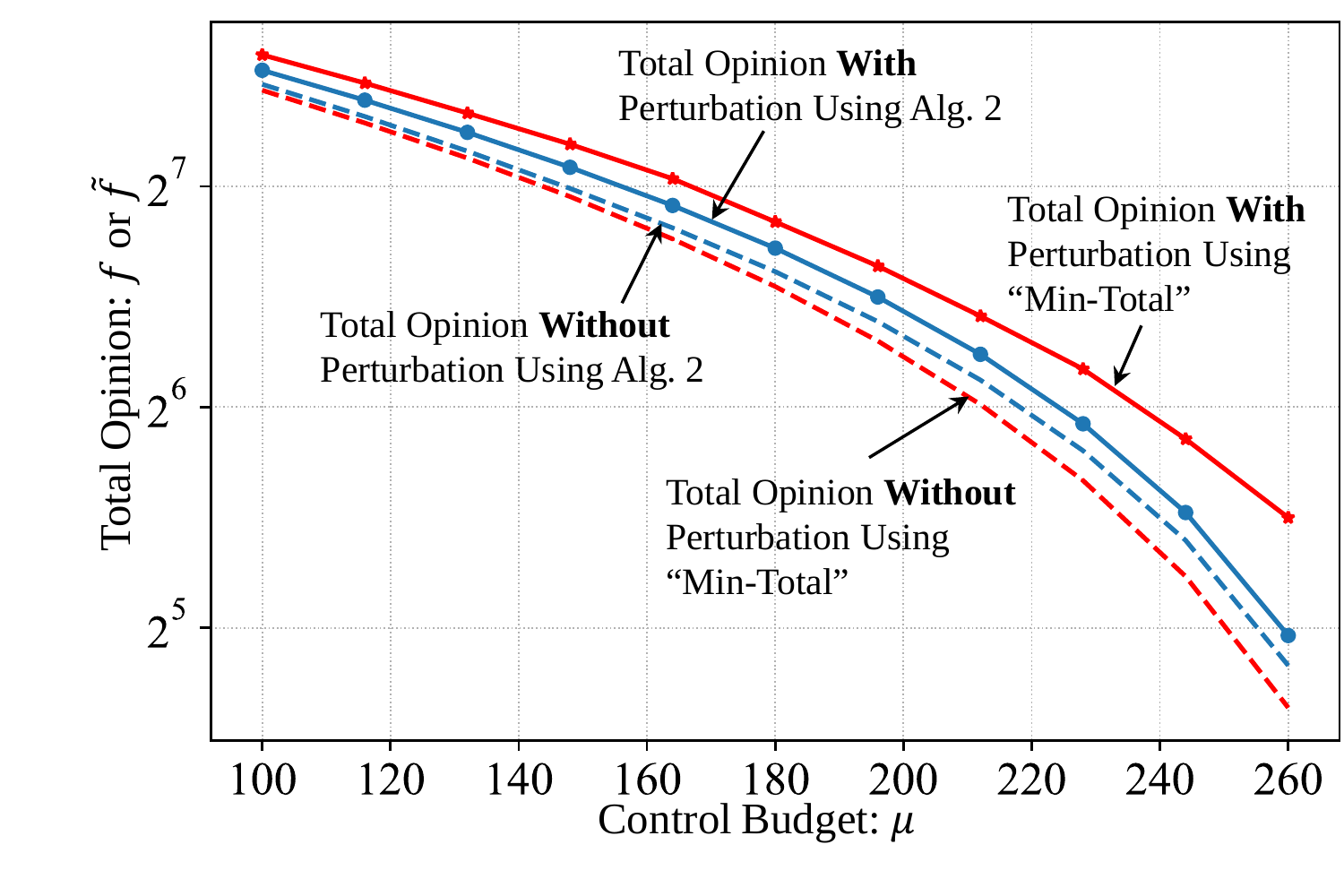}}\hspace{-3pt}
  \subfigure[Collaboration]{\includegraphics[width=0.335\textwidth]{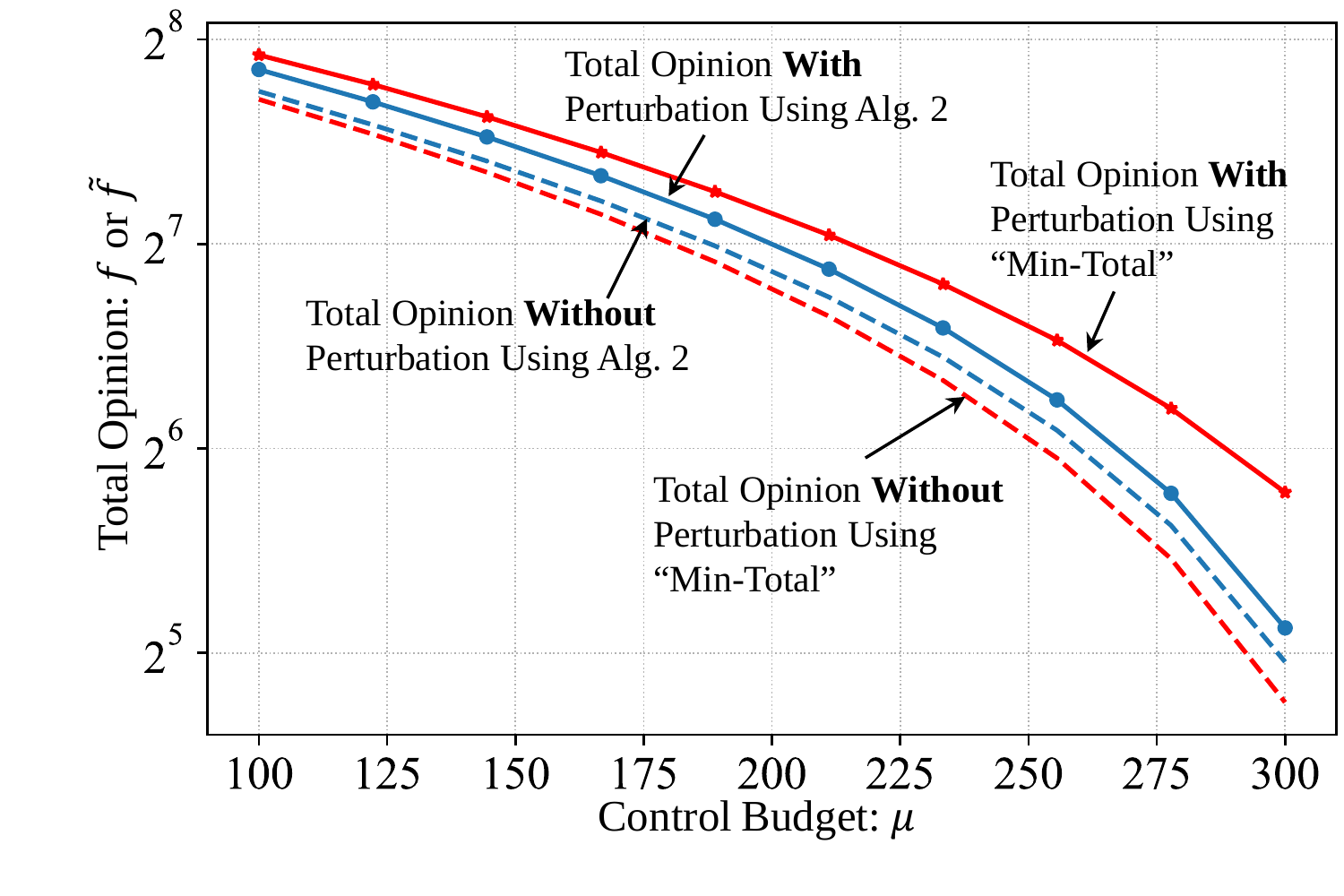}}\hspace{-3pt}
  \subfigure[Facebook]{\includegraphics[width=0.335\textwidth]{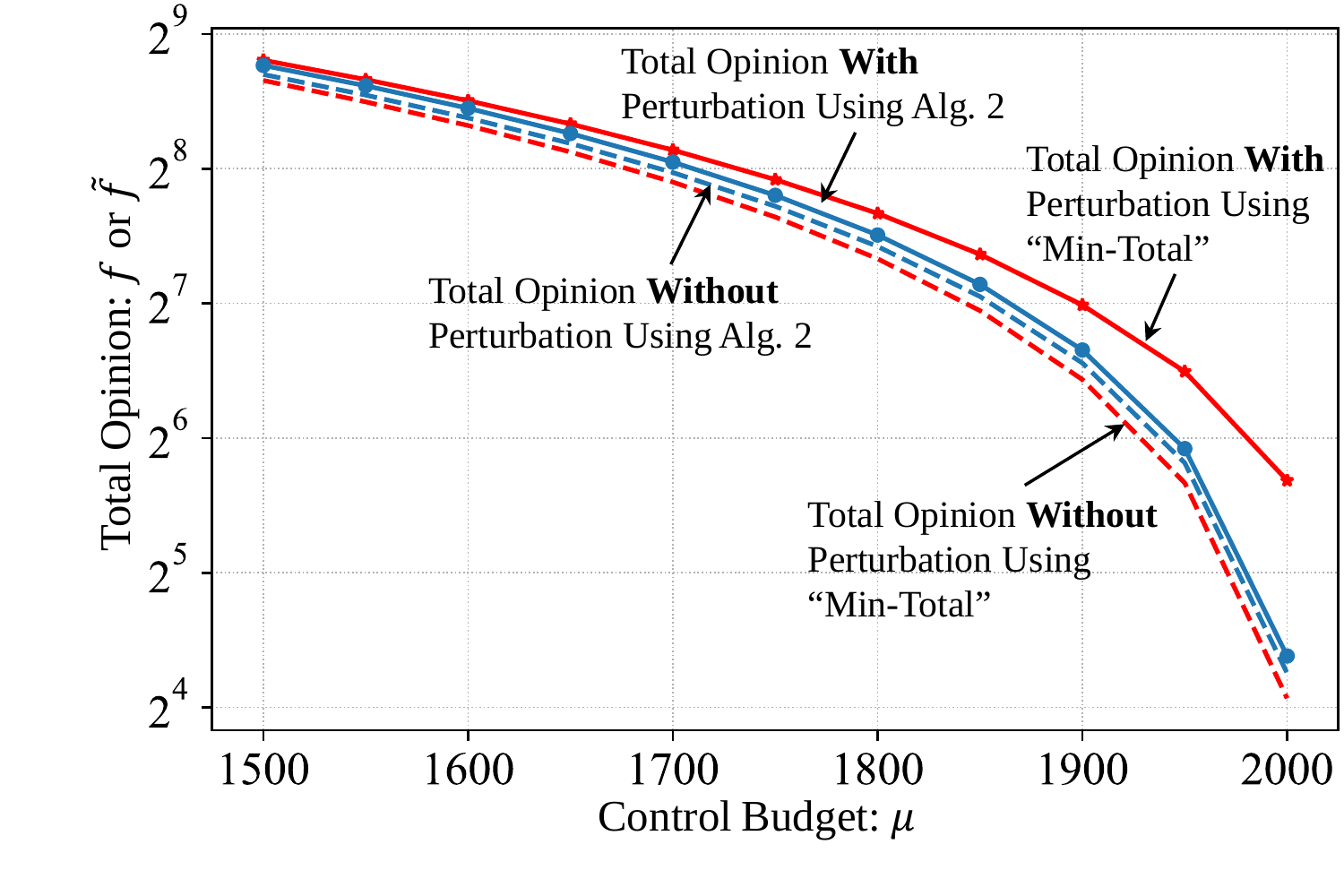}}
  \caption{The total opinion with and without network perturbation using different control strategy.}
  \label{fig:defense}
\end{figure*}
\subsection{Validation of Network Defense}
In this secion, we validate the proposed network defense algorithm in \refsec{sec:defense}. We set the perturbation coefficient $p=0.15$. The number of attackers and the number of target users for each attacker are set to $m=6$ and $k=100$, respectively, for three networks. We can observe the similar results with other parameters settings, and omit them. With the initial internal opinions $\mb{s}_0$, we first use the proposed algorithm in \refsec{sec:defense} to obtain the controlled internal opinion $\mb{s}$. The number of iteration $T_{max}$ is calculated with \refeq{eqn:k-max-error} by setting the error tolerence to $0.01$. Then, we use \refeq{eqn:ori-total-opinion} to obtain the total opinion $f$ when there is no adversarial network perturbation. Next, we perturb the network structure as described in \refsec{sec:attack}, and simulate the FJ opinion dynamics to obtain the total opinion $\tilde{f}$ after the adversarial network perturbation. In \reffig{fig:defense}, We plot the total opinion with and without the adversarial network perturbation in \reffig{fig:defense}, respectively, when the control budget $\mu$ changes. We also compare the proposed defense algorithm with the ``Min-Total'' control strategy as in \cite{xu2020opinion}. That is, We use ``Min-Total'' algorithm to derive the controlled internal opinions, and repeat the above process to obtain the total opinion with and without the adversarial network perturbation. The results are also shown in \reffig{fig:defense}.

From \reffig{fig:defense}, when there is no network perturbation, the total opinion given by ``Min-Total'' is slightly lower than that given by our defense algorithm. This is because that ``Min-Total'' focus on minimizing the total opinion when there is no network perturbation, and can obtain the optimal solution \cite{xu2020opinion}.
However, when the adversarial network perturbation exists, the total opinion given by the proposed network defense algorithm is significantly lower than that given by ``Min-Total'' algorithm.

When using the ``Min-Total'' control algorithm, we can see that the total opinion increases dramatically after the network perturbation. Furthermore, the increase in the total opinion becomes larger when the control budget $\mu$ is larger. For the Facebook network, when the control budget is $\mu=2000$, the total opinion surprisingly increases 200\% compared to the one without adversarial network perturbation, even if there is only six attackers and each attackers can only influence 2.5\% users in the whole network. Nevertheless, when using the proposed network defense algorithm, the increase of total opinion is negaliable. These observations show that the proposed network defense algorithm is more robust to the adversarial network perturbation, and thus, can obtain a lower total opinion with the perturbation.

% \red{\subsection{Resistance to Attacks}}
% \begin{figure}
%   \centering
%   \includegraphics[width=0.5\textwidth]{}
%   \caption{}
% \end{figure}

\section{Conclusion}
\label{sec:conclusion}
In this work, we consider the adversarial network perturbation, where the adversary can let some attackers spread their extreme opinions to target users. We theoreticaly analyze such adversarial network perturbation's influence on the network's total opinion. From the adversary's perspective, we analyze the optimal strategy to choose the attackers and the target users, so that the total opinion is maximized. Then, from the network defender's perspective, we formualte a Stackelberg game to minimize the total opinion under such adversarial network perturbation, and device an projected subgradient algorithm to solve the fromulated game. Simulations on real social networks validate our analysis of the network perturbation and the effectiveness of the proposed opinion contorl algorithm.

\ifCLASSOPTIONcaptionsoff
  \newpage
\fi

% \clearpage
\bibliographystyle{IEEEtran}
\bibliography{journal_main}

\appendices
\section{Proof of \refthm{thm:converge}}
\label{apdx:thm-converge}
To prove \refthm{thm:converge}, we first present the following lemma.

\begin{lemma}[Convergence of Project Subgradient Method]
  \label{lma:converge-subgradient}
  Let $\mb{s}^*$ be the optimal solution to the problem in \refeq{eqn:rocpi-relax}, and $\mb{s}_0$ is the initial point of \refalg{alg:subgrad-defense}. If $\Vert \mb{s}_0 - \mb{s}\Vert_2 \le R$ and the subgradient in \refeq{eqn:sub-grad} satisfies that $\Vert \mb{g} \Vert_2 \le G$, then we have
  \begin{equation}
    f^*_{alg} - f^* \le \frac{R^2 + \eta_0^2G^2\xi_2}{2\eta_0\xi_1},
    \label{eqn:basic-converge}
  \end{equation}
  where $f^*_{alg}$ is the optimal value given by \refalg{alg:subgrad-defense} with $T_{max}$ iterations, $\eta_0$ is the initial step size of \refalg{alg:subgrad-defense}, $\xi_1 = \sum_{i=1}^{T_{max}} 1 / \sqrt{i}$, and $\xi_2 = \sum_{i=1}^{T_{max}} 1/ i$.
\end{lemma}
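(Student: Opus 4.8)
The plan is to carry out the classical convergence analysis of the projected subgradient method under the nonsummable diminishing step rule $\eta_i = \eta_0/\sqrt{i}$, exploiting the convexity established in \refthm{thm:convex-defense}. Write $\mathcal{C}$ for the feasible set of \refeq{eqn:rocpi-relax} (the intersection of the budget hyperplane and the box constraint), let $f$ denote the inner-maximized objective viewed as a function of $\mb{s}$, and let $\mb{s}_1, \mb{s}_2, \ldots$ be the iterates generated by the update $\mb{s}_{i+1} = \text{Proj}(\mb{s}_i - \eta_i \mb{g}_i)$ of line 12 of \refalg{alg:subgrad-defense}, with $\mb{s}_1 = \mb{s}_0$. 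The quantity to track is the squared distance $\Vert \mb{s}_i - \mb{s}^*\Vert_2^2$ to the optimum.

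First I would establish the one-step recursion. Since $\mb{s}^* \in \mathcal{C}$ is a fixed point of the projection and $\text{Proj}(\cdot)$ onto a closed convex set is nonexpansive, we have $\Vert \mb{s}_{i+1} - \mb{s}^*\Vert_2^2 \le \Vert \mb{s}_i - \eta_i \mb{g}_i - \mb{s}^*\Vert_2^2$, whose right-hand side expands to $\Vert \mb{s}_i - \mb{s}^*\Vert_2^2 - 2\eta_i\, \mb{g}_i^T(\mb{s}_i - \mb{s}^*) + \eta_i^2 \Vert \mb{g}_i\Vert_2^2$. Using that $\mb{g}_i$ is a subgradient of the convex $f$ at $\mb{s}_i$ gives $\mb{g}_i^T(\mb{s}_i - \mb{s}^*) \ge f(\mb{s}_i) - f^*$, and combining this with the hypothesis $\Vert \mb{g}_i\Vert_2 \le G$ yields
$$
\Vert \mb{s}_{i+1} - \mb{s}^*\Vert_2^2 \le \Vert \mb{s}_i - \mb{s}^*\Vert_2^2 - 2\eta_i\big(f(\mb{s}_i) - f^*\big) + \eta_i^2 G^2.
$$

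Next I would sum over $i = 1, \ldots, T_{max}$. The distance terms telescope; discarding the nonnegative leftover $\Vert \mb{s}_{T_{max}+1} - \mb{s}^*\Vert_2^2$ and using $\Vert \mb{s}_1 - \mb{s}^*\Vert_2 = \Vert \mb{s}_0 - \mb{s}^*\Vert_2 \le R$ gives $2\sum_{i} \eta_i\big(f(\mb{s}_i) - f^*\big) \le R^2 + G^2 \sum_{i} \eta_i^2$. Because \refalg{alg:subgrad-defense} records the running best value, $f^*_{alg} = \min_i f(\mb{s}_i)$, so $f(\mb{s}_i) - f^* \ge f^*_{alg} - f^*$ for every $i$; pulling this constant out of the weighted sum lower-bounds the left side by $2(f^*_{alg} - f^*)\sum_i \eta_i$. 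Rearranging gives $f^*_{alg} - f^* \le (R^2 + G^2 \sum_i \eta_i^2)/(2\sum_i \eta_i)$, and substituting $\eta_i = \eta_0/\sqrt{i}$, so that $\sum_i \eta_i = \eta_0\xi_1$ and $\sum_i \eta_i^2 = \eta_0^2\xi_2$, reproduces \refeq{eqn:basic-converge} exactly.

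The analysis is the textbook subgradient bound, so I expect no deep obstacle; the one paper-specific point that warrants care is justifying the subgradient inequality. Here $f$ is a pointwise maximum over $(\mathcal{A},\mathcal{T}_u)$ of functions that are linear in $\mb{s}$, as shown in the proof of \refthm{thm:convex-defense}, so it is convex but nonsmooth. The vector $\mb{g}_i$ of \refeq{eqn:sub-grad} is built from the linear piece that is \emph{active} at $\mb{s}_i$, i.e.\ using the maximizing $(\mathcal{A}^*, \mathcal{T}_u^*)$ returned by \refalg{alg:linear-search}; I would verify that any gradient of an active piece of a pointwise maximum is a valid subgradient of the maximum, which is precisely what makes $\mb{g}_i^T(\mb{s}_i-\mb{s}^*) \ge f(\mb{s}_i)-f^*$ hold. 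The remaining structural facts—closedness and convexity of $\mathcal{C}$ for the nonexpansiveness step, and the existence of the constants $R$ and $G$—are either immediate from the hyperplane-and-box form of the feasible set or assumed as hypotheses of the lemma, to be instantiated in the ensuing proof of \refthm{thm:converge}.
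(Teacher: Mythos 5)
Your proposal is correct, and it is essentially the same argument the paper relies on: the paper's ``proof'' of this lemma is simply a citation to Boyd's subgradient-method lecture notes, and what you have written out---nonexpansiveness of the projection, the subgradient inequality, the telescoping sum, and substitution of the diminishing step sizes $\eta_i = \eta_0/\sqrt{i}$ to get $\sum_i \eta_i = \eta_0\xi_1$ and $\sum_i \eta_i^2 = \eta_0^2\xi_2$---is precisely the standard analysis from that reference, here made self-contained, including the correct observation that the gradient of the active linear piece returned by \refalg{alg:linear-search} is a valid subgradient of the pointwise maximum.
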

\begin{proof}
  See Section 3.2 and Section 6 in \cite{boyd2003subgradient}.
\end{proof}

Now, with \reflma{lma:converge-subgradient}, to prove \refthm{thm:converge}, we need to find the upper bound $R^2$ and $G^2$. We first calculate $R^2$, which is the upper bound of the Euclidean distance between the initial innate opinions $\mb{s}_0$ and the optimal controlled innate opinion $\mb{s}$. Consequently, we have the following lemma
\begin{lemma}
  \label{lma:up-s}
 \begin{equation}
   \Vert \mb{s}_0 - \mb{s}^* \Vert_2^2 \le \mu^2,
 \end{equation}
 where $\mu$ is the control budget.
\end{lemma}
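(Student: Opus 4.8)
The plan is to extract the bound purely from the two constraints on $\mb{s}$ in \refeq{eqn:rocpi-relax} that the optimizer $\mb{s}^*$ must satisfy; the objective function plays no role whatsoever. First I would introduce the shorthand $\mb{d} \triangleq \mb{s}_0 - \mb{s}^*$ and analyze it coordinatewise. Since $\mb{s}^*$ is feasible, the box constraint $\mb{0} \le \mb{s}^* \le \mb{s}_0$ holds, and its upper half $\mb{s}^* \le \mb{s}_0$ guarantees $0 \le s_0(i) - s^*(i)$ for every index $i$; that is, every entry of $\mb{d}$ is non-negative.

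Next I would convert the budget constraint into a bound on the $\ell_1$ norm of $\mb{d}$. Because each $d_i \ge 0$ we have $|d_i| = d_i$, so $\Vert \mb{d}\Vert_1 = \sum_i d_i = \mb{1}^T(\mb{s}_0 - \mb{s}^*)$, and the first constraint of \refeq{eqn:rocpi-relax} gives exactly $\mb{1}^T(\mb{s}_0 - \mb{s}^*) \le \mu$. Hence $\Vert \mb{d}\Vert_1 \le \mu$. This is the single step where the coordinatewise non-negativity from the box constraint is genuinely needed: without it, the budget constraint would only control the signed sum $\sum_i d_i$ rather than the absolute sum $\sum_i |d_i| = \Vert \mb{d}\Vert_1$.

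Finally I would invoke the elementary norm inequality $\Vert \mb{d}\Vert_2 \le \Vert \mb{d}\Vert_1$, which holds for every vector, to get $\Vert \mb{s}_0 - \mb{s}^*\Vert_2 \le \mu$, and squaring both sides yields the claim $\Vert \mb{s}_0 - \mb{s}^*\Vert_2^2 \le \mu^2$. I do not expect any real obstacle here: the whole argument is routine feasibility reasoning, and the only point demanding a moment of care is chaining the sign information on $\mb{d}$ with the budget constraint so that the signed sum $\mb{1}^T\mb{d}$ and the genuine $\ell_1$ norm $\Vert\mb{d}\Vert_1$ coincide. This lemma supplies the constant $R = \mu$ required by \reflma{lma:converge-subgradient}, which is presumably why it is isolated here.
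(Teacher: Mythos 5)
Your proof is correct and follows essentially the same route as the paper: both define $\mb{d} = \mb{s}_0 - \mb{s}^*$, use the box constraint to get $d_i \ge 0$, use the budget constraint to bound $\sum_i d_i$ by $\mu$, and conclude via the fact that for nonnegative vectors $\Vert \mb{d} \Vert_2^2 \le \left(\sum_i d_i\right)^2$ (the paper proves this inline by expanding the square and dropping cross terms, whereas you cite the standard $\ell_2 \le \ell_1$ inequality). If anything, your handling of the budget as an inequality $\mb{1}^T \mb{d} \le \mu$ is slightly more careful than the paper, which asserts equality $\mb{1}^T \mb{d} = \mu$ without justifying that the optimum exhausts the budget.
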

\begin{proof}
   Let $\mb{d} = \mb{s}_0 - \mb{s}^*$, and $d_i = s_0(i) - s^*_i$. Since $\mb{s}^*$ is the optimal solution to problem \refeq{eqn:rocpi-relax}, it is also a feasible solution. Thus, $\mb{0} \preceq  \mb{s}^* \preceq \mb{s}_0$, and $\mb{0} \preceq \mb{d} \preceq \mb{s}_0$, that is, $0 \le d_i \le s_0(i)$. In addition, with the control budget constraint $\mb{1}^T \mb{s}_0 - \mb{1}^T \mb{s}^* = \mu$, we also have $\mb{1}^T \mb{d} = \mu$, that is, $\sum_i d_i = \mu$. Note that
   $$
    \begin{aligned}
      \mu^2 &= (d_1 + \cdots + d_N)^2 = \sum_i d_i^2 + 2\times \sum_{i\neq j} d_i d_j\\
      &= \Vert \mb{d} \Vert_2^2 + 2\times \sum_{i\neq j} d_i d_j\\
      \Rightarrow & \Vert \mb{d} \Vert_2^2 = \Vert \mb{s}_0 - \mb{s}^* \Vert_2^2 = \mu^2 - 2\times \sum_{i\neq j} d_i d_j.
    \end{aligned}
   $$
   With $d_i \ge 0$ for each $i$, we have $2\times \sum_{i\neq j} d_i d_j \ge 0$. Consequently, we have$\Vert \mb{d} \Vert_2^2 \le \mu^2$. This ends the proof.
\end{proof}
Next, we analyze $G$, which is the upper bound of subgradient $\mb{g}$'s length, and have the following lemma
\begin{lemma}
  \label{lma:up-g}
  \begin{equation}
    \Vert \mb{g} \Vert_2^2 \le N \| \mb{B} \|_2^2 \left(1 + p^2mk\cdot\| \mb{B} \|_2^2 (1 + \| \mb{W} \|_2^2) \right).
  \end{equation}
  where $\mb{W}$ is the adjacency matrix, and $\mb{B} = [\mb{I} - (\mb{I} - \mb{A}) \mb{W}]^{-1}$.
\end{lemma}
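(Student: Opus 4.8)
The plan is to bound $\mb{g}$ term-by-term using submultiplicativity of the spectral norm, and then reduce everything to a single combinatorial estimate of $\| \mb{\Delta}_{W} \|_2$. First I would split the subgradient in \refeq{eqn:sub-grad} into its unperturbed part $\mb{g}_1 = (\mb{B}\mb{A})^T \mb{1}$ and its perturbation part $\mb{g}_2 = (\mb{B}(\mb{I}-\mb{A})\mb{\Delta}_{W}\mb{B}\mb{A})^T\mb{1}$, so that $\mb{g} = \mb{g}_1 + p\,\mb{g}_2$. The diagonal structure $\mb{A} = \Diag{\alpha_i}$ with $0 < \alpha_i < 1$ gives the two elementary facts $\|\mb{A}\|_2 = \max_i \alpha_i \le 1$ and $\|\mb{I}-\mb{A}\|_2 = \max_i(1-\alpha_i) \le 1$, and since $\mb{1}$ is the all-ones vector in $\mathbb{R}^N$ we have $\|\mb{1}\|_2^2 = N$. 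Applying $\|\mb{M}\mb{x}\|_2 \le \|\mb{M}\|_2\|\mb{x}\|_2$ together with $\|\mb{B}^T\|_2 = \|\mb{B}\|_2$ repeatedly then yields $\|\mb{g}_1\|_2 \le \|\mb{B}\|_2\sqrt{N}$ and $\|\mb{g}_2\|_2 \le \|\mb{B}\|_2^2\,\|\mb{\Delta}_{W}\|_2\,\sqrt{N}$.

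The technical heart is an upper bound on $\|\mb{\Delta}_{W}\|_2$ starting from the explicit expression \refeq{eqn:delta-w}. Here I would exploit that $\mb{\Delta}_{W}$ is supported only on the target rows $v\in\mathcal{T}$, with $v$-th row equal to $\sum_{u\in\mathcal{A}_v}\mb{e}_u^T - |\mathcal{A}_v|\,\mb{W}_{v,:}$; because distinct target rows sit in orthogonal output directions, $\|\mb{\Delta}_{W}\|_2 \le \|\mb{\Delta}_{W}\|_F$ and the Frobenius norm decouples into a sum of per-row contributions. The ``injection'' part $\sum_{v}\sum_{u\in\mathcal{A}_v}\mb{e}_v\mb{e}_u^T$ is a $0/1$ matrix whose number of nonzeros is the total number of attacker--target pairs, bounded by $mk$ through the cardinality constraints $|\mathcal{A}|\le m$ and $|\mathcal{T}_u|\le k$; the ``discount'' part is proportional to $\mb{W}$ acting on the same pattern, which the row-stochasticity $\mb{W}\mb{1} = \mb{1}$ (so each row has $\ell_2$-norm at most one and $\|\mb{W}_{v,:}\|_2\le\|\mb{W}\|_2$) turns into a factor $\|\mb{W}\|_2^2$. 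Assembling these gives $\|\mb{\Delta}_{W}\|_2^2 \le mk\,(1+\|\mb{W}\|_2^2)$.

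Substituting this estimate back, the unperturbed term contributes $\|\mb{g}_1\|_2^2 \le N\|\mb{B}\|_2^2$ and the perturbation term contributes $p^2\|\mb{g}_2\|_2^2 \le p^2\,mk\,N\|\mb{B}\|_2^4(1+\|\mb{W}\|_2^2)$, which together are exactly the right-hand side of the claim after factoring out $N\|\mb{B}\|_2^2$.

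I expect two places to require the most care. The first, and the main obstacle, is translating the purely combinatorial cardinality constraints on $\mathcal{A}$ and $\mathcal{T}_u$ into a clean spectral-norm bound on $\mb{\Delta}_{W}$: the sparse block structure and the row-stochasticity of $\mb{W}$ are what make the $mk(1+\|\mb{W}\|_2^2)$ bound attainable, and one must track carefully how the per-target multiplicities $|\mathcal{A}_v|$ enter the Frobenius estimate. The second is the final assembly of $\|\mb{g}\|_2^2 = \|\mb{g}_1 + p\,\mb{g}_2\|_2^2$ into the stated \emph{additive} form; I would either verify that the cross term $2p\langle\mb{g}_1,\mb{g}_2\rangle$ is absorbed or argue it away, since a naive triangle inequality would leave an extra cross contribution that the stated bound does not display.
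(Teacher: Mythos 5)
Your outer skeleton coincides with the paper's proof: the same split $\mb{g} = \mb{g}_1 + p\,\mb{g}_2$, the same submultiplicative estimates $\Vert\mb{g}_1\Vert_2 \le \sqrt{N}\Vert\mb{B}\Vert_2$ and $\Vert\mb{g}_2\Vert_2 \le \sqrt{N}\Vert\mb{B}\Vert_2^2\Vert\mb{\Delta}_{W}\Vert_2$ via $\Vert\mb{A}\Vert_2\le 1$ and $\Vert\mb{I}-\mb{A}\Vert_2\le 1$, and the same target estimate $\Vert\mb{\Delta}_{W}\Vert_2^2\le mk(1+\Vert\mb{W}\Vert_2^2)$. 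The only structural difference is the route to that last estimate: you group $\mb{\Delta}_{W}$ by target rows $v\in\mathcal{T}$ and pass to the Frobenius norm, whereas the paper decomposes over attacker--target pairs, $\mb{\Delta}_{W}=\sum_{u\in\mathcal{A}}\sum_{v\in\mathcal{T}_u}(\mb{e}_v\mb{e}_u^T-\mb{e}_v\mb{e}_v^T\mb{W})$, and bounds each pair term by $1+\Vert\mb{W}\Vert_2^2$.

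However, both of the "places requiring care" that you flagged are genuine gaps, and your proposal leaves them open rather than closing them. First, the multiplicity issue is fatal to the row-wise route: row $v$ of $\mb{\Delta}_{W}$ is $\sum_{u\in\mathcal{A}_v}\mb{e}_u^T - \vert\mathcal{A}_v\vert\,\mb{e}_v^T\mb{W}$, so its squared norm is at most $\vert\mathcal{A}_v\vert + \vert\mathcal{A}_v\vert^2\Vert\mb{W}\Vert_2^2$ (the cross term is nonpositive because the entries of $\mb{W}$ are nonnegative), and summing over $v$ gives $mk + \Vert\mb{W}\Vert_2^2\sum_{v}\vert\mathcal{A}_v\vert^2$, where $\sum_v\vert\mathcal{A}_v\vert^2$ can be as large as $m^2k$ when attackers share targets --- a situation the model explicitly permits (that is the whole role of $\vert\mathcal{A}_v\vert$ and of the constraint $\vert\mathcal{A}_v\vert p\le 1$). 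This is not mere bookkeeping: take $k=1$, $m=2$ attackers both targeting a single user $v_0$ whose unique in-neighbor carries weight $1$ and is not an attacker, and let $\mb{W}$ be a permutation matrix, so $\Vert\mb{W}\Vert_2=1$; then $\mb{\Delta}_{W}$ is rank one with $\Vert\mb{\Delta}_{W}\Vert_2^2 = m+m^2 = 6 > mk(1+\Vert\mb{W}\Vert_2^2)=4$, so the intermediate estimate you promise to recover (the paper's \refeq{eqn:delta-W-upper}) is simply false unless target sets are assumed disjoint. Second, the cross term $2p\langle\mb{g}_1,\mb{g}_2\rangle$ is not sign-controlled, so the purely additive form of the stated bound does not follow from the triangle inequality. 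You should know that the paper's own proof commits exactly these two errors: it asserts $\Vert\mb{a}+p\,\mb{b}\Vert_2^2\le\Vert\mb{a}\Vert_2^2+p^2\Vert\mb{b}\Vert_2^2$, and it asserts $\Vert\sum_i\mb{M}_i\Vert_2^2\le\sum_i\Vert\mb{M}_i\Vert_2^2$ for pair terms that share a common row, neither of which is valid in general. So your diagnosis of where the argument is fragile is exactly right, but diagnosing the soft spots is not the same as resolving them, and as written your proposal does not prove the lemma --- nor, without an additional disjointness assumption on the $\mathcal{T}_u$, does the paper's.
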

\begin{proof}
  With the definition of subgradient in \refeq{eqn:sub-grad}, we first have
  $$
  \begin{aligned}
    \Vert \mb{g} \Vert_2^2 &= \Vert (\mb{B} \mb{A})^T \mb{1} + p \times \big( \mb{B} (\mb{I} - \mb{A}) \mb{\Delta}_{W}\mb{B} \mb{A} \big)^T \mb{1} \Vert_2^2\\
    &\le \Vert (\mb{B} \mb{A})^T \mb{1} \Vert_2^2 + p^2 \times \Vert \big( \mb{B} (\mb{I} - \mb{A}) \mb{\Delta}_{W}\mb{B} \mb{A} \big)^T \mb{1} \Vert_2^2.
  \end{aligned}
  $$
  We first analyze the first term $\Vert (\mb{B} \mb{A})^T \mb{1} \Vert_2^2$, and we can have
  $$
  \begin{aligned}
    \Vert (\mb{B} \mb{A})^T \mb{1} \Vert_2^2 &\le \| (\mb{B} \mb{A})^T \|_2^2 \cdot \| \mb{1} \|_2^2 = N \cdot \| \mb{B} \mb{A} \|_2^2 \\
    &\le N \cdot \| \mb{B} \|_2^2 \| \mb{A} \|_2^2 = \alpha_{\max}^2 N \| \mb{B} \|_2^2 \le N \| \mb{B} \|_2^2,
  \end{aligned}
  $$
  where $\alpha_{\max}$ is the largest stubbornness of the whole population and $\alpha_{\max} \le 1$.  
  % We first show that $\mb{M} = \mb{B}\mb{A} = [\mb{I} - (\mb{I} - \mb{A})\mb{W}]^{-1} \mb{A}$ is a row stochastic matrix.
  % To prove $\mb{M}$ is a row stochastic matrix, we need to show that all entries of $\mb{M}$ are non-negative, and each row of $\mb{M}$ sums to $1$, that is, $\mb{M} \mb{1} = \mb{1}$.
  % With the matrix identity that
  % $$
  % \mb{B} = [\mb{I} - (\mb{I} - \mb{A})\mb{W}]^{-1} = \mathbf{I} + (\mb{I} - \mb{A})\mb{W} + \big((\mb{I} - \mb{A})\mb{W}\big)^2 + \cdots,
  % $$
  % all entries of $\mb{B}$ are non-negative. Next, we show that $\mb{M} \mb{1} = \mb{1}.$ Since the adjacency matrix $\mb{W}$ is a row stochastic matrix, we have $\mb{W} \mb{1} = \mb{1}$. Consequently, we have
  % $$
  % [\mb{I} - (\mb{I} - \mb{A})\mb{W}] \mb{1} = \mb{1} - (\mb{I} - \mb{A})\mb{W} \mb{1} = \mb{1} - (\mb{I} - \mb{A})\mb{1} = \mb{A} \mb{1}.
  % $$
  % Thus, we have
  % $$
  %   [\mb{I} - (\mb{I} - \mb{A})\mb{W}]^{-1} \mb{A} \mb{1} = \mb{B} \mb{A} \mb{1} = \mb{M} \mb{1} = \mb{1}.
  % $$
  % This indicates that $\mb{M} = \mb{B} \mb{A}$ is a row stochastic matrix. Note that the largest eigenvalue of a row stochastic matrix is $1$. Therefore, 
  % $$
  % \Vert (\mb{B} \mb{A})^T \Vert = \Vert \mb{M}^T \Vert_2 = \Vert \mb{M} \Vert_2 = 1.
  % $$
  % With the above derivation, we have
  % $$
  % \Vert (\mb{B} \mb{A})^T \mb{1} \Vert_2^2 \le \Vert (\mb{B} \mb{A})^T \Vert_2^2 \cdot \Vert \mb{1} \Vert_2^2 = N.
  % $$

  Next, we analyze $\Vert \big( \mb{B} (\mb{I} - \mb{A}) \mb{\Delta}_{W}\mb{B} \mb{A} \big)^T \mb{1} \Vert_2^2$. We first rewrite this term as
  \begin{equation}
    \label{eqn:appendix-derive-1}
    \begin{aligned}
      &\Vert \big( \mb{B} (\mb{I} - \mb{A}) \mb{\Delta}_{W}\mb{B} \mb{A} \big)^T \mb{1} \Vert_2^2 = \Vert \mb{A} \mb{B}^T \mb{\Delta}_{W}^T (\mb{I} - \mb{A}) \mb{B}^T \mb{1} \Vert_2^2\\
      &\le \| \mb{A} \|_2^2 \cdot  \| \mb{B}^T \|_2^4 \cdot \| \mb{\Delta}_{W}^T \|_2^2 \cdot \| \mb{I} - \mb{A} \|_2^2 \cdot \| \mb{1} \|_2^2\\
      &= N \alpha_{\max}^2 \cdot (1 - \alpha_{\min})^2 \| \mb{B} \|_2^4 \cdot \| \mb{\Delta}_{W} \|_2^2\\
      & \le N \| \mb{B} \|_2^4 \cdot \| \mb{\Delta}_{W} \|_2^2
      % &= \Vert (\mb{B} \mb{A}) ^ T \mb{\Delta}_{W}^T (\mb{I} - \mb{A}) \mb{A}^{-1} (\mb{B} \mb{A})^T \mb{1} \Vert_2^2\\
      % &\le \Vert (\mb{B} \mb{A}) ^ T \Vert_2^2 \cdot \Vert \mb{\Delta}_{W}^T \Vert_2^2 \cdot \Vert (\mb{I} - \mb{A}) \mb{A}^{-1} \Vert_2^2 \cdot \Vert (\mb{B} \mb{A})^T \Vert_2^2 \cdot \Vert \mb{1} \Vert_2^2\\
      % &= \Vert \mb{\Delta}_{W}^T \Vert_2^2 \cdot \Vert (\mb{I} - \mb{A}) \mb{A}^{-1} \Vert_2^2 \cdot N,
    \end{aligned}
  \end{equation}
  where $\alpha_{\min}$ is the smallest stubbornness of the whole population, and $1 - \alpha_{\min} \le 1$.
  % where we use $\Vert (\mb{B} \mb{A})^T \Vert_2^2 = 1$ and $\Vert \mb{1} \Vert_2^2 = N$ in the last step.
  % Note that $(\mb{I} - \mb{A}) \mb{A}^{-1}$ is a diagonal matrix, whose $i$-th entry on the diagonal is $(1 - \alpha_i) / \alpha_i$, where $\alpha_i$ is the stubbornness of the $i$-th user. Thus, we have
  % \begin{equation}
  %   \label{eqn:A-matrix-upper}
  %   \Vert (\mb{I} - \mb{A}) \mb{A}^{-1} \Vert_2^2 \le \left(\frac{1}{\alpha_{min}} - 1\right)^2,
  % \end{equation}
  % where $\alpha_{min} = \min_i \{ \alpha_i\}$.
  
  Next, we anlayze the term $\Vert \mb{\Delta}_{W} \Vert_2^2$. With the definition of $\mb{\Delta}_{W}$ in \refeq{eqn:delta-w}, we have
  $$
  \begin{aligned}
    \Vert \mb{\Delta}_{W} \Vert_2^2 &= \left\Vert \sum_{v \in \mathcal{T}} \mb{e}_v \sum_{u \in \mathcal{A}_v} \mb{e}_u^T - \sum_{v \in \mathcal{T}} \vert \mathcal{A}_v\vert \mb{e}_v \mb{e}_v^T\mb{W} \right\Vert_2^2\\
    &= \left\Vert \sum_{v \in \mathcal{T}}\sum_{u \in \mathcal{A}_v} \mb{e}_v \mb{e}_u^T - \sum_{v \in \mathcal{T}} \sum_{u\in\mathcal{A}_v} \mb{e}_v \mb{e}_v^T\mb{W} \right\Vert_2^2\\
    &= \left\Vert \sum_{v \in \mathcal{T}}\sum_{u \in \mathcal{A}_v} \left(\mb{e}_v \mb{e}_u^T - \mb{e}_v \mb{e}_v^T\mb{W}\right) \right\Vert_2^2\\
    &= \left\Vert \sum_{u \in \mathcal{A}}\sum_{v \in \mathcal{T}_u} \left(\mb{e}_v \mb{e}_u^T - \mb{e}_v \mb{e}_v^T\mb{W}\right) \right\Vert_2^2\\
    &\le \sum_{u \in \mathcal{A}}\sum_{v \in \mathcal{T}_u} \left\Vert \mb{e}_v \mb{e}_u^T - \mb{e}_v \mb{e}_v^T\mb{W} \right\Vert_2^2\\
    &\le \sum_{u \in \mathcal{A}}\sum_{v \in \mathcal{T}_u} \left\Vert \mb{e}_v \mb{e}_u^T \right\Vert_2^2 + \left\Vert \mb{e}_v \mb{e}_v^T \right\Vert_2^2 \cdot \| \mb{W} \|_2^2\\
    % &\le \left\Vert \sum_{v \in \mathcal{T}} \sum_{u \in \mathcal{A}_v} \mb{e}_v \mb{e}_u^T \right\Vert_2^2 + \left\Vert \sum_{v \in \mathcal{T}} \vert \mathcal{A}_v\vert \mb{e}_v \mb{e}_v^T\mb{W} \right\Vert_2^2 \\
    % &\le \sum_{v \in \mathcal{T}} \sum_{u \in \mathcal{A}_v} \left\Vert \mb{e}_v \mb{e}_u^T \right\Vert_2^2 + \sum_{v \in \mathcal{T}} \vert \mathcal{A}_v\vert^2 \left\Vert \mb{e}_v \mb{e}_v^T\mb{W} \right\Vert_2^2\\
    % &\le \sum_{v \in \mathcal{T}} \sum_{u \in \mathcal{A}_v} \left\Vert \mb{e}_v \mb{e}_u^T \right\Vert_2^2 + \sum_{v \in \mathcal{T}} m^2 \left\Vert \mb{e}_v \mb{e}_v^T\right\Vert_2^2\cdot \left\Vert \mb{W} \right\Vert_2^2.
  \end{aligned}
  $$
  For the term $\mb{e}_v \mb{e}_u^T$, its Frobenius norm satisfies $\Vert \mb{e}_v \mb{e}_u^T \Vert^2_F = tr(\mb{e}_v \mb{e}_u^T \mb{e}_u \mb{e}_v^T) = 1$. According to the relation that $\Vert \mb{C} \Vert_2 \le \Vert \mb{C} \Vert_F$ for any square matrix $\mb{C}$, we can have $\Vert \mb{e}_v \mb{e}_u^T \Vert_2 \le \Vert \mb{e}_v \mb{e}_u^T \Vert_F = 1$. Thus, $\Vert \mb{e}_v \mb{e}_u^T \Vert_2^2 \le 1$. Similarly, we can also have $\Vert \mb{e}_v \mb{e}_v^T \Vert_2^2 \le 1$.
  % Note that the adjacency matrix $\mb{W}$ is also a row stochastic matrix, and we have $\Vert \mb{W} \Vert_2^2 = 1$. Thus, $\left\Vert \mb{e}_v \mb{e}_v^T\mb{W} \right\Vert_2^2 \le \left\Vert \mb{e}_v \mb{e}_v^T \right\Vert_2^2 \cdot \left\Vert \mb{W} \right\Vert_2^2 \le 1$, and consequently,
  Consequently, we have 
  \begin{equation}
    \begin{aligned}
      \Vert \mb{\Delta}_{W} \Vert_2^2 &\le \sum_{u \in \mathcal{A}}\sum_{v \in \mathcal{T}_u} \left\Vert \mb{e}_v \mb{e}_u^T \right\Vert_2^2 + \left\Vert \mb{e}_v \mb{e}_v^T \right\Vert_2^2 \cdot \| \mb{W} \|_2^2 \\
    &\le \sum_{u \in \mathcal{A}}\sum_{v \in \mathcal{T}_u} (1 + \| \mb{W} \|_2^2) \le mk\cdot(1 + \| \mb{W} \|_2^2).
    \end{aligned}
    \label{eqn:delta-W-upper}
  \end{equation}
  With $\Vert \mb{\Delta}_{W} \Vert_2^2$ in \refeq{eqn:delta-W-upper} and the derivation in \refeq{eqn:appendix-derive-1}, we have
  $$
  \begin{aligned}
    \Vert \big( \mb{B} (\mb{I} - \mb{A}) \mb{\Delta}_{W}\mb{B} \mb{A} \big)^T \mb{1} \Vert_2^2 &\le N \| \mb{B} \|_2^4 \| \mb{\Delta}_{W} \|_2^2 \\
    &\le Nmk \| \mb{B} \|_2^4 \| (1 + \| \mb{W} \|_2^2),
  \end{aligned}
  $$
  Consequently, we have the upper bound of subgradient $\mb{g}$, that is,
  $$
  \begin{aligned}
    \Vert \mb{g} \Vert_2^2 &\le \Vert (\mb{B} \mb{A})^T \mb{1} \Vert_2^2 + p^2 \times \Vert \big( \mb{B} (\mb{I} - \mb{A}) \mb{\Delta}_{W}\mb{B} \mb{A} \big)^T \mb{1} \Vert_2^2\\
    &\le N \| \mb{B} \|_2^2 \left(1 + p^2mk\cdot\| \mb{B} \|_2^2 (1 + \| \mb{W} \|_2^2) \right)
  \end{aligned}
  $$
  This ends the proof.
\end{proof}
With the above lemmas, we can replace $R^2$ and $G^2$ in \refeq{eqn:basic-converge} with $\mu^2$ and $\left(1 + 2mkp^2\cdot \left(\frac{1}{\alpha_{min}} - 1\right)^2\right)$, respectively. This leads to the convergence in \refeq{eqn:k-max-error}.

% that's all folks
\end{document}